\newtheorem{theorem}{Theorem}
\newtheorem{proposition}[theorem]{Proposition}
\newtheorem{corollary}[theorem]{Corollary}
\newtheorem{definition}[theorem]{Definition}
\title{Reputation-Based Information Design for Inducing Prosocial Behavior}
\begin{document}





\author{Alexandre Reiffers-Masson\thanks{Robert Bosch Centre for Cyber-Physical Systems, Indian Institute of Science, Bengaluru, 560012, India, \textit{email: reiffers.alexandre@gmail.com}}, Rajesh Sundaresan \thanks{Department of Electrical Communication Engineering,  Indian Institute of Science, Bengaluru, 560012, India, \textit{email: rajeshs@iisc.ac.in}} \footnotemark[1]}

\maketitle

\begin{abstract} 
We study the idea of information design for inducing prosocial behavior. As an example, we ground our study in the context of electricity consumption. Electricity utility providers would like to reduce the total power consumed by their residential consumer segment. Supply to this segment is often subsidized, and the saved power can be diverted to more profitable segments. Alternatively, the provider may be keen on earning carbon credits by inducing reduced consumption in this segment. The societal network in which the consumers reside may have a prevailing norm, for example, saving power is environment friendly and is considered good. Those that consume less are considered more prosocial and may derive a larger reputational benefit. How can the service provider, who is familiar with the prevailing norm and the consumption of all users, design suitable feedback signals that exploit reputation benefits to reduce net consumption? We call this a problem of information design and address this question in this paper. We consider a continuum of agents. Each agent has a different intrinsic motivation to reduce her power consumption. Each agent models the power consumption of the others via a distribution. Using this distribution, agents will anticipate their reputational benefit and choose a power consumption by trading off their own intrinsic motivation to do a prosocial action, the cost of this prosocial action and their reputation. We assume that the service provider can provide two types of quantized feedbacks of the power consumption. We study their advantages and disadvantages. For each feedback, we characterize the corresponding mean field equilibrium, using a fixed point equation. Besides computing the mean field equilibrium, we highlight the need for a systematic study of information design, by showing that revealing less information to the society can lead to more prosociality. In the last part of the paper, we introduce the notion of privacy and provide a new quantized feedback, more flexible than the previous ones, that respects agents' privacy concern but at the same time improves prosociality. The results of this study are not restricted to the framework of energy efficiency but are also applicable to congestion problems in road traffic and other resource sharing problems.
\end{abstract}

%


%
%
%

\section{Introduction.}
We ground our study of reputation-based information design for inducing prosocial behavior in the context of electricity consumption. Electricity utility providers are interested in the reduction of the power consumed by their residential consumer segment. Supply to this segment is often subsidized and if the electricity utility provider is able to save power, it will be possible to redirect it to more profitable segments. To do so, a classical economical approach would be to implement taxes.  However, in the residential consumer segment, these price-based policies can be difficult to implement for political reasons, they are often unpopular, and for engineering reasons (for instance, lack of knowledge of demand elasticities for energy-efficient durable goods) .
 
Recently, considerable attention has been paid to non-price interventions and policies that {\em nudge} consumers to conserve energy. In general, these methods are decentralized ways to change energy consumption behavior, are inexpensive to implement, and can be applied in any kind of society, whether developed or developing. One example in the context of demand response is the creation of a lottery with the distribution of energy coupons \cite{li2015energy,xia2017energycoupon,li2018mean}. Another example is the use of social comparison and reputation to improve the prosocial behavior of an agent (which is, in this case, the reduction of power consumption) \cite{schultz2007constructive,schultz2018constructive,allcott2011social}.   Other experiments have been performed and are summarized in \cite{allcott2010behavior}. The focus of our paper is to derive a game theoretical analysis of policies that use social comparison as a tool for improving prosocial behavior. This study is not the first of its kind and we highlight the difference with other related works in section \ref{sec:related works}. Social comparison has been proved to be a powerful means to induce prosocial behavior. In the context of the reduction of the total power consumed by the residential consumer segment, we suggest using social comparison instead of monetary reward because reduced consumption can be portrayed as a prosocial action. There is then a reputation benefit that can be attached to the reduction effort. Contrary to what one might expect at first glance, monetary rewards may reduce the expected prosocial action in some setting, as it has been demonstrated in \cite{benabou2006incentives}.

Our game theoretical model and analysis can be described as follows. The intensity of a person's prosocial orientation will be determined by three factors: her intrinsic interest in doing a prosocial action, the cost associated with this action and finally the desire to be {\em perceived} as generous and altruistic. The last factor can be seen as the agent's desire to have a good reputation in the society. By society we mean a group of people upon whom the agents want to make a good impression. From this perspective, the society can be even the individual herself, when the individual is viewing and assessing his own action, in which case we talk about self-signaling and identity. As observed in \cite{benabou2006incentives}, from a modeling perspective, there is no major difference between the society being a set of individuals or the agent herself.

The society assesses an individual's reputation based on its collective estimate of the agent's prosociality. This estimation is done using the knowledge of the actions of the different agents. This knowledge of actions can be total or partial. For example, in the context of household energy consumption, we can imagine two types of observation of the consumption of different households.

\begin{itemize}
\item In the first scenario, the electricity utility provider may publicize a full report to the society on the consumption of each agent\footnote{While this might seem an overkill in the electricity consumption context, such things are possible in the context of donations, voluntarism, committee work that help run an institution, etc. Our study should be seen in this broader context.}. The society is then able to estimate precisely how much more altruistic is one person's action than another person's action (thus ordering each person according to her respective altruism level).
\item In the second scenario, perhaps to address privacy concerns, the report may only show that the consumption of an agent is larger or smaller than a threshold. Others in the society, then, are only able to say if a person belongs to the more generous group or to the less generous one.
\end{itemize}

These two examples have the merit of demonstrating that it is possible, by manipulating the amount of information observed by members of the society, to mobilize societal opinion concerning an agent. We also notice that the second scenario respects agents' privacy concerns to a greater extent than the first one. Indeed, in the first scenario the action of each agent is fully reported to the society (there is a complete disclosure of the level of action). On the contrary, in the second scenario, the society gets to know only whether an agent belongs to the more generous group or to the less generous one.

Our goal is to show the importance of information design to improve prosocial behavior. The initial model is inspired from \cite{benabou2006incentives} which we extend in this paper to more complex information designs. We compare the impact of the different parameters on the efficiency and highlight the impact of the different information designs on prosocial behavior. From a mathematical point of view, extensions of the model \cite{benabou2006incentives} requires new proofs of existence, uniqueness under some conditions, and characterization of the mean field equilibria which we provide in this paper. Additionally, we discuss several possible extensions and future directions which we feel will be of use to researchers in the field.

\subsection{Organization and Main Results.}
The remainder of the paper is organized as follows. Section \ref{sec: model} introduces the game theoretical model. We describe the main components of the societal network which in our example is the complete (fully connected) network of all the agents, the agents' intrinsic motivations, their actions, the cost function associated with performing an action, and finally the reputational benefit associated to a given action. In this section, we also explain how the computation of reputation benefit can be understood as a signal extraction problem. We also introduce the different feedbacks of an agent's action. Section \ref{sec: equilibrium study} is the main section of this paper. We prove the existence and study the properties of mean field equilibria for the two feedback mechanisms. We are able to prove the existence of a mean field equilibrium and characterize it using a fixed point equation. For the second feedback, sufficient conditions for some uniqueness properties of the equilibrium follow naturally. We also compute explicitly the set of mean field equilibria for the feedbacks considered, under the assumption that the intrinsic values follows the uniform distribution. In section \ref{sec: shaping of feedback}, we will show how important the choice of feedback and the choice of the partition (into less prosocial and more prosocial groups) can be in improving the expected prosocial action. In section \ref{sec: privacy}, we formalize a notion of privacy, and then introduce a  new feedback
that improves privacy at the expense of the aggregate level of prosociality. In section \ref{sec:related works}, we discuss the related works and highlight the major difference of our work with the unified framework suggested in \cite{bergemann2016information,bergemann2017information} and with the norm based approach described in \cite{benabou2006incentives,benabou2011laws,ali2016image}. In section \ref{sec: extension}, we discuss many  possible extensions of this work and future directions. Finally, section \ref{sec: conclusion} concludes the paper with a very brief summary of our work.

\section{Model.}\label{sec: model}

\begin{table}[t]\caption{Main notations used throughout this paper}
\centering
\begin{tabular}{|p{0.20\columnwidth}|p{0.7\columnwidth}|}
\hline
{\it Symbol} & {\it Meaning}\\
\hline
$\mathcal{I}\coloneqq[0,1]$& Agent set. $i\in\mathcal{I}$ is the index of a given agent. \\
$\mathcal{A}\coloneqq[0,+\infty)$& Action space. $a_i\in\mathcal{A}$ is the action of agent $i$.\\
$w_i\in\mathbb{R}_+$& Intrinsic value. The empirical cdf of $(w_i)_{i \in I}$ is $F$.\\
$L(a_i)$& Quantized version of agent $i$'s action provided by the service provider to the societal network.\\
$C(\cdot)$ &  Cost function such that $C:[0,+\infty)\rightarrow \mathbb{R}_+$. \\
\hline
\end{tabular}
\label{tab:notation}
\end{table}

\vskip.2cm

\begin{table*}[t!]
  \centering
    \caption{Different $L$ functions with the associated utilities}
  \begin{tabular}{|l|c|c|}
  \hline
  
  {\it Feedback}& $L(a_i)$ & {\it Agents' objectives}\\
  \hline
    & & \\
    Type-A & $a_i$ & $ \displaystyle \max_{a_i\in[0,+\infty)}\left\{a_iw_i-C(a_i)+\beta\mathbb{E}\left[w_i\mid a_i\right]\right\}$ \\
    & & \\
  \hline 
    & & \\
    Type-B & $1_{a_i \geq \theta}$
    & $\begin{array}{lll}\max\left\{ \displaystyle \max_{a_i\in[0,\theta]}a_iw_i-C(a_i)+\beta\mathbb{E}\left[w_i\mid a_i< \theta\right],\right.\\
    \displaystyle \left. \hspace*{.9cm}\max_{a_i\in[\theta,+\infty)}a_iw_i-C(a_i)+\beta\mathbb{E}\left[w_i\mid a_i\geq \theta\right]\right\}\end{array}$ \\
    & & \\
  \hline
  \end{tabular}
  \label{tab:1}
\end{table*}

We assume that the societal network consists of a continuum number of agents, as we shall make precise soon. Each agent chooses an action from a set which is taken to be totally ordered in terms of prosocial behavior. An agent's choice of an action is based on three components, the \textit{agent's intrinsic motivation} (for instance the prosocial orientation of the agent), a \textit{cost} associated with her action and a \textit{reputational benefit}. The reputational benefit captures the effect of judgments and reactions of other members of the society towards an agent. A social service provider (e.g. utility service provider or government) is interested in the maximization of the global level of prosocial actions. For each agent, the service provider can manipulate an agent's reputational benefit by designing the information other agents get about this agent's choice. We assume that the agent's intrinsic motivation is \textit{private information} known only to herself but the distribution of agents' intrinsic motivation is \textit{common knowledge}. Agents interact with each other through the information fed by the social service provider and the consequent reputational benefits they derive. Moreover, since the number of agents is infinite, the global level of prosocial actions in the societal network is the outcome of a \textit{mean field equilibrium}. The service provider's problem is thus an \textit{Information Design Problem}. 

 The three main components of  the societal network are the following. i) Agents and actions: What is the set of agents, what are the agents' action spaces, and how do others in the society interpret agents' actions? ii) Intrinsic value and cost function: What is the information known about the agents' intrinsic values and the cost functions?  iii) Reputational benefit: How is reputational benefit quantified? In the rest of this section, we develop a model of the societal network and address each of these questions. The main symbols used in this paper are summarized in Table \ref{tab:notation}.

\textbf{Agents and actions: } The agent set is the continuum $\mathcal{I}\coloneqq[0,1]$. Let $i\in\mathcal{I}$ be a given agent. We suppose that each agent has a continuum of possible actions. We denote the action space as $\mathcal{A}\coloneqq[0,+\infty)$. The action of agent $i$ is denoted by $a_i\in\mathcal{A}$. For all $(a_{i},a'_{i})\in \mathcal{A}^2$, if $a_{i}>a'_{i}$ then agent $i$ performs a greater prosocial action when she chooses $a_{i}$ over $a'_{i}$. For instance if $a_{i}$ captures the energy {\em savings} of agent $i$, then the savings of energy is greater under the greater prosocial action $a_{i}$ than under the lesser prosocial action $a'_{i}$.

\textbf{Intrinsic value and cost function:} Each agent $i$ is endowed with an intrinsic value $w_i$ and gets a reward of value $a_iw_i$ for an action $a_i$. Larger the $w_i$, greater the propensity of the agent towards a more prosocial action. The cumulative distribution function (cdf) of $w_i$ is $F$. For each agent $i$, performing the action $a_i$ costs $C(a_i)$, where $C:[0,+\infty)\rightarrow \mathbb{R}_+$ is a convex and increasing function. $F$ and $C$ are common knowledge to all the agents and the service provider.

\textbf{Reputational benefit:} The reputational benefit is described by the following steps.
 \begin{enumerate}
 \item For each $i\in\mathcal{I}$, the intrinsic motivation of agent $i$ is given by her prosocial propensity $w_i$. When the agent $i$ chooses an action $a_i$, she reveals some information about $w_i$ to the service provider.
 \item Given $a_i$, which is agent $i$'s action, the service provider reveals a quantized version of  $a_i$, denoted by $L(a_i)$, to all agents in the societal network. The function $L$ is common knowledge. Our goal is to understand the consequence of various choices of $L$ by the service provider. Since $L$ controls the amount of information about agent $i$'s action $a_i$, this is an information design problem. A simple example is the privacy friendly feedback scheme
$L(a_i)=1_{a_i \geq \theta}$, with $\theta\in\mathbb{R}_+$.
\item When an action $a_i$ is taken and $L(a_i)$ is revealed to all, as discussed above, some information about the agent's private $w_i$ is also revealed to the societal network yielding a reputational benefit $\beta\mathbb{E}\left[w_i\mid \{L(a_k)\}_{k\in\mathcal{I}}\right]$, with $\beta\in\mathbb{R}_+$.  The notion that the reputation of an agent is based on the societal network's opinion about her intrinsic value $w_i$, arising from the information that her action $a_i$ reveals, has already been used in \cite{corneo1997theory, benabou2006incentives, benabou2011laws}. 
\end{enumerate}

The above is a model in the context of consumption of energy in a psychological experiment in \cite{schultz2007constructive}, which we now describe. It encompasses the following steps.

\textit{Step 1:} For a given day the service provider measures the consumption of energy of each household in a neighborhood.

\textit{Step 2:} If the consumption of a given household is below $\theta$, the service provider puts a green flag in front of that house; otherwise, nothing is done. 

\textit{Step 3:} Each household observes the flag in front of every other house and estimates the intrinsic motivation of that other household. 

For each $i\in\mathcal{I}$, we assume that the utility of agent $i$  for action $a_i$ is the sum of the rewards arising from her propensity for prosocial behavior, the cost function and the reputational benefit. Agent $i$ is thus interested in maximizing:
\begin{equation}\label{eqn: utility}
\max_{a_i\in [0,+\infty)}  U(a_i,w_i;a_{-i},w_{-i})   \coloneqq a_iw_i-C(a_i) +\beta\mathbb{E}\left[w_i\mid \{L(a_k)\}_{k\in\mathcal{I}}\right],
\end{equation}

with $a_{-i}$ (resp. $w_{-i}$) being the actions (resp. the intrinsic motivations) of all the agents except agent $i$.

Let $G$ be the cdf of actions $a_i$, $i\in\mathcal{I}$. This results in a certain feedback profile $\{L(a_k)\}_{k\in\mathcal{I}}$.
The best response of agent $i$ to $\{L(a_k),\;k\in\mathcal{I}\}$, which is a function of $G$, is given by:
\begin{equation}
a_i^*(w_i;G)=\text{arg}\max_{a_i\in[0,+\infty)}U(a_i,w_i;a_{-i},w_{-i}).
\end{equation}
We will assume that $a_i^*(w_i;G)$ is uniquely defined. See examples later. For a given $G$, let $TG$ be the distribution of the induced best response actions.
A mean field equilibrium is defined as follows.
\begin{definition} The distribution $G^*$ is a mean field equilibrium if $TG^*=G^*$.
\end{definition}
We refer the technical reader to a mathematically rigorous reformulation in the Appendix \ref{appendix: Game Reformulation}. The equilibrium $G^*$ will naturally depend on the feedback signal $L$.

One objective of the service provider could be to optimize the aggregate prosocial action:
\begin{equation}\label{eq:w}
W\coloneqq\max_{L(\cdot)} \int_{\mathcal{A}} b\,dG^*(b).
\end{equation}
Let $W_j$ be the expected prosocial action when type-$j$ feedback is provided, with $j\in\{A,B\}$. Type-A feedback reveals the action of an agent to the whole society. This feedback does not preserve the privacy of an agent's action. Under type-B feedback, the society only knows whether an agent belongs to the more prosocial group or to the less prosocial one. Therefore this provides better privacy than type-A feedback. See Table  \ref{tab:1} for a summary the two feedbacks and the utilities.


We now make the following assumptions about the cost function and the feedback functions for illustration of our main ideas.\\

\noindent \textbf{Assumption A:}
\begin{enumerate}
\item The cost function $C(a)=\frac{1}{2}\alpha a^2$.
\item $L(\cdot)$ is one of the functions defined in Table \ref{tab:1}. In type-B, the service provider can additionally control one threshold parameter, denoted $\theta$.
\end{enumerate}

\noindent We shall discuss extensions to general convex costs in section \ref{sec: extension} and additional feedbacks in section \ref{sec: privacy}.

\section{Equilibria for type-A and type-B feedback schemes.} \label{sec: equilibrium study}
In this section, we characterize and study the properties of mean field equilibria for reputational benefit feedback  types A and B. For ease of notation, we write $\mathbb{E}\left[w_i\mid L_j( \cdot)\right]$ for $\mathbb{E}\left[w_i\mid \{L_j(a_k)\}_{k\in\mathcal{I}}\right]$, when the feedback is type-$j$, $j\in\{\text{A, B}\}$. We will also write $\mathbb{E}\left[w_i\mid L_j( \cdot), L_j(a_i)\right]$ to draw the reader's attention to the feedback $L(a_i)$.

\subsection{Type-A equilibrium.}

We begin with a characterization of type-A equilibrium which is also portrayed in figure \ref{fig:Type_A_Distribution}.

\begin{theorem} \label{prop:fixed point type c} Under assumption A, there is a unique mean field equilibrium for type-A feedback. The prosocial action of player $i$ is the unique solution to the following equation:
\begin{equation}
a_i=\frac{w_i}{\alpha}+\beta(1- e^{-\frac{a_i}{\beta} }).
\end{equation}
Moreover,
\begin{eqnarray}
\mathbb{E}[w_i\mid L_A(\cdot)]&=&w_i\\\label{eq:derivative reputation type_A}
\frac{\partial \mathbb{E}[w_i\mid L_A(\cdot)]}{\partial a_i}&=&\frac{\beta}{\alpha}(1- e^{-\frac{a_i}{\beta} }).
\end{eqnarray}
\end{theorem}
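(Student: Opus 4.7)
The approach is to treat the type-A case as a continuous signaling game. Under type-A feedback the quantizer is the identity $L_A(a_i)=a_i$, so society observes every action exactly. If the (yet-to-be-determined) equilibrium strategy $w\mapsto a^{*}(w)$ turns out to be strictly increasing in $w$, society can invert it and recover $w_i$ from the observation of $a_i$, which immediately gives the second claim $\mathbb{E}[w_i\mid L_A(\cdot)]=w_i$ on the equilibrium path. Accordingly, I would conjecture a fully separating equilibrium, introduce $\psi:=(a^{*})^{-1}$ so that the society's Bayesian inference reads $\mathbb{E}[w_i\mid a_i]=\psi(a_i)$, and look for a fixed point of the agents' best-response problem.

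Next, given $\psi$, an agent of type $w_i$ maximizes $aw_i-\tfrac{1}{2}\alpha a^{2}+\beta\psi(a)$ over $a\ge 0$, with first-order condition $w_i=\alpha a-\beta\psi'(a)$. Imposing on-path consistency $w_i=\psi(a)$ yields the first-order linear ODE
\begin{equation*}
\beta\,\psi'(a)+\psi(a)=\alpha a.
\end{equation*}
The natural boundary condition is $\psi(0)=0$: the lowest type $w=0$ has no intrinsic reward for a costly positive action and would strictly prefer $a=0$. Solving with integrating factor $e^{a/\beta}$ produces $\psi(a)=\alpha a-\alpha\beta(1-e^{-a/\beta})$, and rearranging $w_i=\psi(a_i)$ gives precisely the stated fixed-point equation $a_i=w_i/\alpha+\beta(1-e^{-a_i/\beta})$, from which the derivative formula for $\mathbb{E}[w_i\mid L_A(\cdot)]=\psi(a_i)$ follows by direct differentiation.

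I would then verify that this candidate really is a mean field equilibrium. The map $a\mapsto a-\beta(1-e^{-a/\beta})$ has derivative $1-e^{-a/\beta}$, which is strictly positive for $a>0$; it vanishes at the origin and diverges as $a\to\infty$, so for each $w_i\ge 0$ the fixed-point equation has a unique solution $a_i\ge 0$. The induced strategy $a^{*}(w)$ is strictly monotone, so $\psi=(a^{*})^{-1}$ is well defined and the separating inference is consistent. Global optimality of each agent's choice is guaranteed because the second derivative of the objective equals $-\alpha+\beta\psi''(a)=-\alpha(1-e^{-a/\beta})\le 0$, so the FOC locates the unique global maximum.

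The hard part, in my view, is the uniqueness of the mean field equilibrium itself rather than the construction of a particular one. Continuous signaling games typically admit pooling or partially pooling equilibria alongside the separating one, and both the boundary choice $\psi(0)=0$ and the rejection of atoms in $G^{*}$ have to be justified. I would handle this by a single-crossing/monotone-comparative-statics argument on the marginal gain $w_i$ in the first-order condition: any equilibrium strategy must be weakly increasing in $w$, and any atom (pool) in $G^{*}$ would let higher types inside the pool strictly gain by a small upward deviation (which strictly improves the posterior mean), a contradiction. Combined with the zero-type argument for the boundary condition, this pins down the ODE solution uniquely and hence yields the unique mean field equilibrium asserted by the theorem.
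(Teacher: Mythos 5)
Your proposal is correct and follows essentially the same route as the paper: the separating ansatz $\mathbb{E}[w_i\mid a_i]=w_i$, the first-order condition $w_i=\alpha a-\beta\,\partial_a\mathbb{E}[w_i\mid L_A(\cdot)]$, the resulting linear ODE with boundary condition $a=0$ at $w=0$, and the concluding monotonicity/concavity check; the only cosmetic difference is that you solve the first-order ODE for $\psi$ by an integrating factor while the paper differentiates once more and solves for $\dot x$. One small caveat: your computation gives $\psi'(a)=\alpha(1-e^{-a/\beta})$, which is also what the paper's own derivation yields, so the coefficient $\beta/\alpha$ in the theorem's displayed derivative formula does not in fact follow ``by direct differentiation'' as you assert --- that constant appears to be a typo in the statement rather than a gap in your argument.
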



\begin{figure}[t!]
    \centering
    \subfigure[Distribution of $a^*(w_i;G)$ under the type-A feedback.]{\includegraphics[scale=0.5]{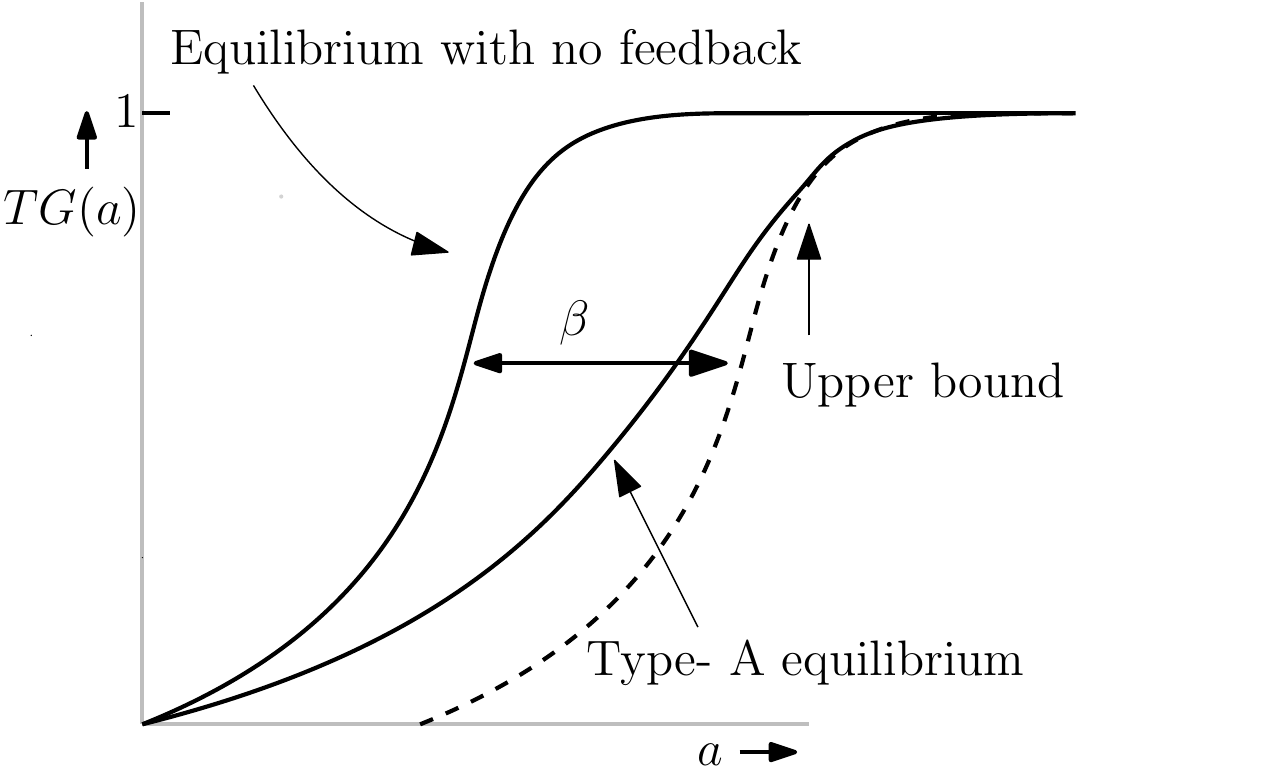}
\label{fig:Type_A_Distribution}}
    \subfigure[$a^*(w_i;G)$ under the type-A feedback.]{\includegraphics[scale = 0.5]{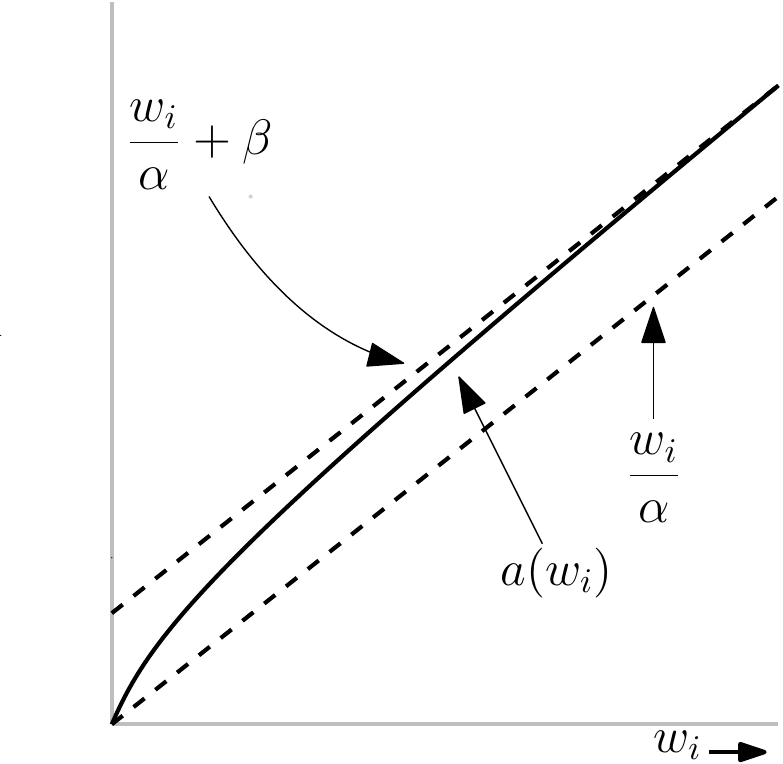}\label{fig:Type_A_Curve}}
    \caption{Distribution and shape of $a^*(w_i;G)$ under the type-A feedback.}
\end{figure}

{\bf Remark}: Theorem \ref{prop:fixed point type c} states that type-A feedback will result in an equilibrium that not only reveals $a_i$ to the whole society but also reveals the intrinsic value $w_i$ (since we will have $\mathbb{E}[w_i\mid L_A(\cdot)]=w_i$). In this case, the societal network learns everything about every agent. Also observe that feedback has pushed the agents towards more prosocial actions. In fact, $w_i/\alpha \leq a_i \leq w_i/\alpha + \beta$. In the limiting case $\lim_ {w_i\rightarrow +\infty}a_i=\frac{w_i}{\alpha}+\beta$ and $\lim_ {w_i\rightarrow 0_+}a_i=\frac{w_i}{\alpha}$. See figure \ref{fig:Type_A_Curve}.

{\bf Remark}: An examination of the proof below will indicate that it is nontrivial and nonstandard because it involves a functional fixed point equation. A similar exercise is carried out in B\'enabou and Tirole for the two-dimensional Gaussian case \cite{benabou2006incentives}. 
\begin{proof} We now provide the proof of Theorem \ref{prop:fixed point type c}. Since we will consider unilateral deviations, let us view $\mathbb{E}[w_i\mid L_A(\cdot)]$ as a function of $a_i$, while keeping all other actions fixed. Consider an agent with a specific $w_i$. For her  not to deviate, we must ensure the first order optimality condition which is
\begin{equation}\label{eq:foc}
w_i = \alpha a_i - \beta \frac{\partial \mathbb{E}[w_i\mid L_A(\cdot)]}{\partial a_i}.
\end{equation}
Since $G^*$ is an equilibrium action profile, \eqref{eq:foc} must hold for all $i\in\mathcal{I}$. Let us search for those strategies that make the right-hand side of \eqref{eq:foc} monotone increasing in $a_i$. By this assumption, whose validity we shall later check for our final solution, revealing $a_i$ is as good as revealing $w_i$ to all agents. This implies that at equilibrium $\mathbb{E}[w_i\mid L_A(\cdot)]=w_i$ for every player $i\in\mathcal{I}$ (i.e., every $w_i$). Plugging this into \eqref{eq:foc}, we have the following:

\begin{equation}
\mathbb{E}[w_i\mid L_A(\cdot)] = \alpha a_i - \beta \frac{\partial \mathbb{E}[w_i\mid L_A(\cdot)]}{\partial a_i}.
\end{equation}

Let us define $x(a_i)=\mathbb{E}[w_i\mid L_A(\cdot)]$. By taking the derivative in $a_i$ in the previous equation, we obtain:
\begin{equation}
\ddot x(a_i) = \frac{1}{\beta} (\alpha  - \dot x(a_i)).
\end{equation}
The solution to this linear differential equation is given by:
\begin{equation}\label{eq:lode}
\dot x(a_i)=\zeta e^{-\frac{a_i}{\beta} }+\alpha.
\end{equation}
For agent $i$ with $w_i=0$, we must have $a_i=0$. Indeed as stated previously the reputation of this agent is equal to $0$. So any nonzero action only adds to cost, and therefore $0$ is the best response. This boundary condition, using \eqref{eq:foc} and \eqref{eq:lode}, yields $\zeta=-\alpha$. By rearranging \eqref{eq:foc}, we have:
\begin{equation}\label{eq: a equi}
a_i=\frac{w_i}{\alpha}+\beta(1- e^{-\frac{a_i}{\beta} }).
\end{equation}
It is easy to see there is a unique solution $a_i$ to \eqref{eq: a equi} (intersection of a linear function and a concave increasing function that starts at a strictly positive value $w_i/\alpha$ but saturates at $w_i/\alpha+\beta$, see Figure \ref{fig:fixedpoint_A}). Finally, we check that the monotonicity assumption holds for the obtained $a_i$. From \eqref{eq:lode}, we have $\alpha a_i-\beta \dot x(a_i)=\alpha[a_i-\beta(1-e^{-\frac{a_i}{\beta}})]$. Its derivative is $\alpha(1-e^{-\frac{a_i}{\beta}})\geq 0$. This concludes the proof.
\begin{figure}[ht]
\centering
\includegraphics[scale=0.7]{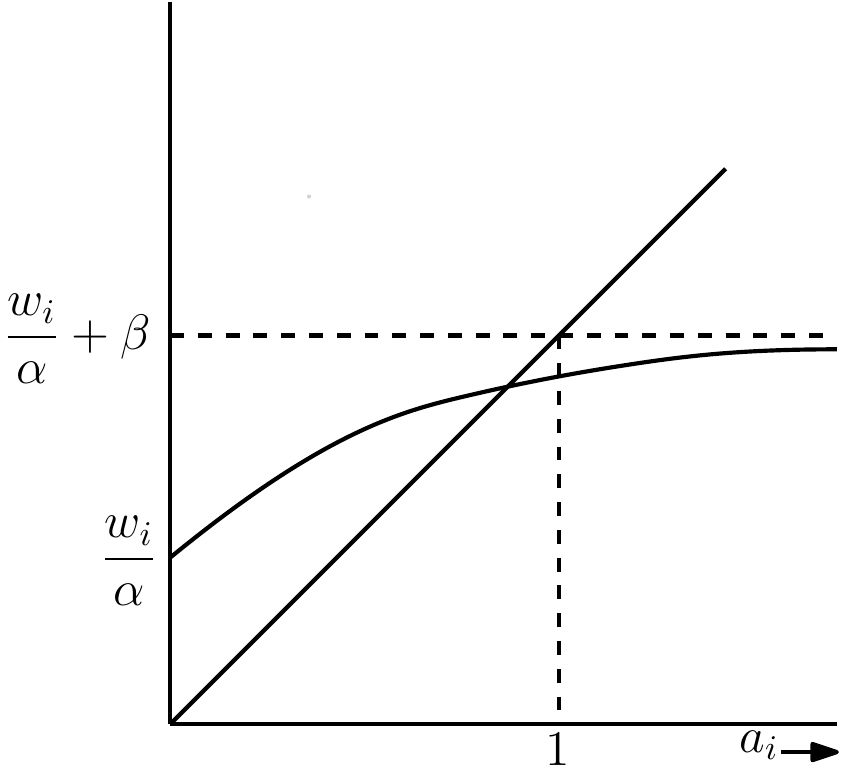}
\caption{Intersection between $a_i$ and $\frac{w_i}{\alpha}+\beta(1- e^{-\frac{a_i}{\beta} })$.}
\label{fig:fixedpoint_A}
\end{figure}

Moreover we can observe that because $\frac{w_i}{\alpha}-a_i+\beta(1- e^{-\frac{a_i}{\beta} })$ is decreasing in $a_i$ which satisfied the initial concavity assumption stated at the beginning of the proof.
\end{proof}

{\bf Remark}: It is clear from the proof that uniqueness crucially hinges on the boundary condition $a_i = 0$ when $w_i = 0$.

\subsection{Type-B equilibrium.}
As described in the previous section, agent $i$ computes her prosocial action given her intrinsic motivation $w_i$ and her cost function $C(\cdot)$. We state in this subsection, a fixed point equation that characterizes the best response $a_i$ to an action profile $G$ for each $w_i$ under the assumption that $L(\cdot)$ is of  type-B. Following this, we will demonstrate the existence of mean field equilibria.

For the type-B feedback function $L_B(\cdot)$, and with $G$ being the cdf of actions, agent $i$'s best response $a_i^{*}(w_i;G)$ is determined via:
 \begin{equation}\label{eq:utilitytypea}
\begin{array}{lll}
U_B(a_i^*(w_i;G),w_i)\coloneqq\\
\max\left\{\underbrace{a^*_{i1}(w_i;G)w_i-\displaystyle\alpha\frac{(a^*_{i1}(w_i;G))^2}{2}+\beta\mathbb{E}\left[w_i\mid L_B(\cdot),a^*_{i1}<\theta \right]}_{U_{1B}(a^*_{i1},w_i)},\right.\\
\hspace*{.94cm}\left.\underbrace{a^*_{i2}(w_i;G)w_i-\displaystyle\alpha\frac{(a^*_{i2}(w_i;G))^2}{2}+\beta\mathbb{E}\left[w_i\mid L_B(\cdot),a^*_{i2}\geq\theta\right]}_{U_{2B}(a^*_{i2},w_i)}\right\},\end{array}
\end{equation}
where (if the maxima below exist): 
\begin{eqnarray}
a^*_{i1}(w_i;G)&\coloneqq&\text{arg max}_{a_i\in[0,\theta)}U_{1B}(a_i,w_i),\\
a^*_{i2}(w_i;G)&\coloneqq&\text{arg max}_{a_i\in[\theta,+\infty)}U_{2B}(a_i,w_i),\\\nonumber
a_i^*(w_i;G)&\coloneqq&\displaystyle\left\{\begin{array}{ll}a^*_{i1}(w_i;G)&\text{if } U_{1B}(a^*_{i1}(w_i;G),w_i)>U_{2B}(a^*_{i2}(w_i;G),w_i),\\
a^*_{i2}(w_i;G)&\text{otherwise.}
\end{array}\right.
\end{eqnarray}

Here, under assumptions of existence, the candidate action level $a^*_{i1}(w_i;G)$ corresponds to the optimal action of agent $i$ if she were to perform an action below the threshold $\theta$. On the contrary, if agent $i$ were to perform an action $\theta$ or above , then the optimal choice would be $a^*_{i2}(w_i;G)$. The final choice $a_i^*(w_i;G)$ is the better of the two and thus the global optimum. In order to study the equilibria of this game, the first step is to derive an expression of $a^*_{i1}(w_i;G)$, $a^*_{i2}(w_i;G)$ and $a^*(w_i;G)$ as a function of $w_i$ by assuming that the reputational benefits are given. Then the second step will be to derive a closed form expression of reputational benefits. 

Let $c_1\coloneqq\mathbb{E}[w_i\mid L_B(\cdot),a_i<\theta]$ and $c_2\coloneqq\mathbb{E}[w_i\mid L_B(\cdot),a_i\geq\theta]$. Clearly $c_1$ is a function of $L_B(\cdot)$ and $1_{a_i<\theta}$ and hence is a constant for all $a_i<\theta$. Similarly $c_2$ is a constant for all $a_i\geq\theta$. 

\begin{proposition}\label{prop: best response type-B}(Best response to G) Under assumption A, if  $\Delta_B(G) \coloneqq c_2-c_1$ is positive,  then agent $i$'s best response to $G$ is
\begin{equation}\label{eq:a type b}
a^*_i(w_i;G)\coloneqq\left\{\begin{array}{ll}\displaystyle w_i/\alpha &\text{if } w_i\in[0,u) ~ \cup ~ [\alpha\theta,+\infty) \\
\theta&\text{if } u \geq w_i <\alpha \theta, \end{array}\right.
\end{equation}
with $u\in[0,\alpha\theta]$ satisfying:
\begin{equation}\label{eq: u best response}
u= \left[ \alpha\theta - \sqrt{2\alpha\beta\Delta_B(G)} \right]_+.
\end{equation}
\end{proposition}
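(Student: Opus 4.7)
The plan is to reduce the maximization in \eqref{eq:utilitytypea} to two constrained quadratic problems, since on each of the pieces $[0,\theta)$ and $[\theta,+\infty)$ the reputational term is a constant ($c_1$ or $c_2$, respectively). On each piece the objective is the strictly concave quadratic $a_i w_i - \tfrac{\alpha}{2}a_i^2 + \text{const}$, whose unconstrained maximizer is $a_i=w_i/\alpha$. First I would split into two regimes according to whether $w_i/\alpha$ lies in $[0,\theta)$ or in $[\theta,+\infty)$, and in each regime compute $a^*_{i1}$ and $a^*_{i2}$ as either the interior point $w_i/\alpha$ or the boundary point $\theta$.

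Second, I would plug these candidate actions back into $U_{1B}$ and $U_{2B}$ and compare them. For $w_i\geq\alpha\theta$, a direct computation gives
\begin{equation*}
U_{2B}-U_{1B}=\tfrac{1}{2\alpha}(w_i-\alpha\theta)^2+\beta\Delta_B(G)>0,
\end{equation*}
so option 2 always wins under the standing hypothesis $\Delta_B(G)>0$, yielding $a_i^*=w_i/\alpha$. For $w_i<\alpha\theta$ the corresponding difference becomes
\begin{equation*}
U_{2B}-U_{1B}=\beta\Delta_B(G)-\tfrac{1}{2\alpha}(\alpha\theta-w_i)^2,
\end{equation*}
which changes sign once as $w_i$ varies. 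Solving $U_{2B}=U_{1B}$ then yields the unique crossover $u=\alpha\theta-\sqrt{2\alpha\beta\Delta_B(G)}$: agents with $w_i\geq u$ prefer the high-reputation group and optimally sit at the boundary $a_i=\theta$, while agents with $w_i<u$ stay at the interior optimum $a_i=w_i/\alpha$. Clamping with the positive part $[\,\cdot\,]_+$ in \eqref{eq: u best response} handles the degenerate case where the computed cutoff would be negative, in which case the entire population in $[0,\alpha\theta)$ jumps to $\theta$; note also that $\sqrt{2\alpha\beta\Delta_B(G)}\geq 0$ guarantees $u\leq\alpha\theta$, placing $u$ in the required interval $[0,\alpha\theta]$.

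The main obstacle I anticipate is not algebraic but rather the careful bookkeeping at the boundary $a_i=\theta$: because $L_B(\theta)=1_{\theta\geq\theta}=1$, the point $\theta$ belongs to the high-reputation branch and therefore carries conditional expectation $c_2$, not $c_1$. This is exactly why the ``tipping'' agents in $[u,\alpha\theta)$ optimally choose $a_i=\theta$ rather than the supremum $\theta^-$ of the low branch, and it is essential to get \eqref{eq:a type b} correctly. Once this is fixed, uniqueness of the best response follows from strict concavity on each piece together with the strict inequality $U_{2B}\neq U_{1B}$ off the measure-zero set $\{w_i=u\}$.
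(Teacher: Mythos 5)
Your proposal is correct and follows essentially the same route as the paper: on each piece the reputational term is constant, so the problem reduces to comparing the concave quadratic evaluated at the interior optimum $w_i/\alpha$ versus the boundary point $\theta$, and solving $U_{2B}=U_{1B}$ gives exactly the cutoff $u=[\alpha\theta-\sqrt{2\alpha\beta\Delta_B(G)}]_+$ via the same identity $\frac{(w_i-\alpha\theta)^2}{2\alpha}\leq\beta(c_2-c_1)$. Your explicit remark that $a_i=\theta$ lies on the high-reputation branch (so the tipping agents attain $c_2$ rather than approaching $c_1$ at $\theta^-$) is a point the paper uses implicitly, and your case split and algebra match the paper's Case 1/Case 2 argument.
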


\begin{proof}From measurability considerations, we have
\begin{eqnarray}\nonumber
c_1&\coloneqq&\mathbb{E}[w_i\mid L_B(\cdot),a_i<\theta]\text{ is independent of } a_i,\\\nonumber
c_2&\coloneqq&\mathbb{E}[w_i\mid L_B(\cdot),a_i\geq\theta]\text{ is independent of } a_i.
\end{eqnarray}
Case 1: Consider an individual with $w_i<\alpha\theta$. Her nonreputational utility function is depicted in Figure \ref{fig:test1}.
\begin{figure}[ht]
\centering
\includegraphics[scale=0.7]{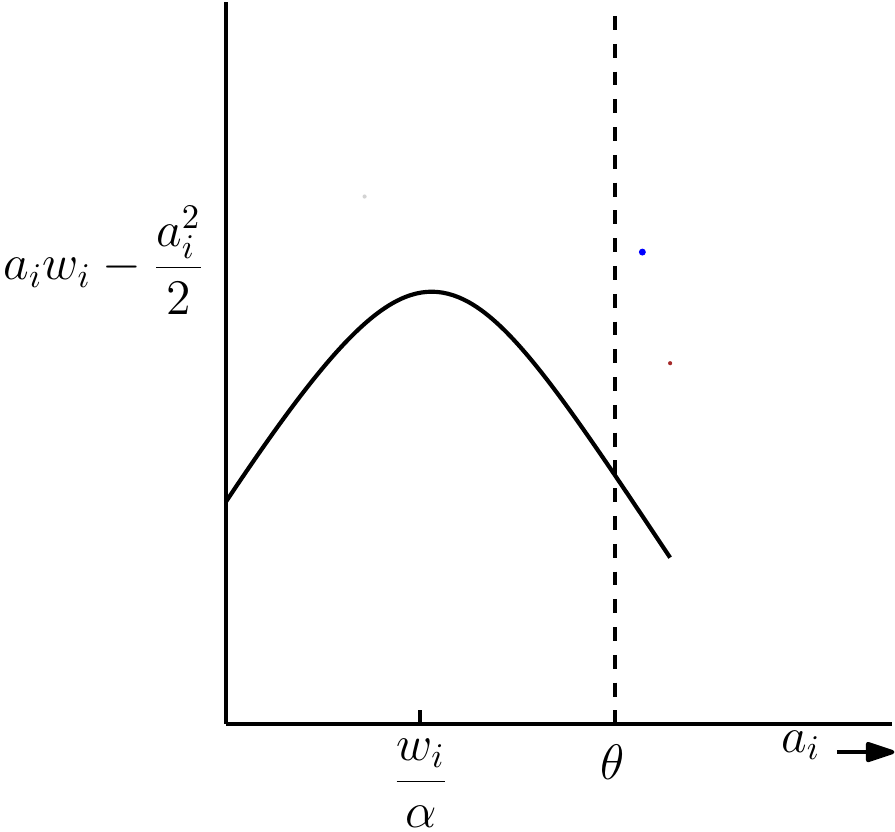}
\caption{Case 1: $\theta>\frac{w_i}{\alpha}$}
\label{fig:test1}
\end{figure}
Since $c_2$ is a constant independent of $a_i$ when $a_i\geq\theta$, it is optimal for the agent  not to play any $a_i$ other than $\theta$ in the set $[\theta,+\infty)$. Now let us consider $a_i<\theta$ versus $a_i=\theta$. When $a_i<\theta$, it is optimal for her to play $a_i=w_i/\alpha$, and she derives the utility:
\begin{equation}
U_B(a_{i1}^*(w_i;G),w_i)=\frac{w_i^2}{2\alpha}+\beta c_1.
\end{equation}
When $a_i=\theta$, she   derives the utility:
\begin{equation}
U_B(\theta,w_i)=\theta w_i-\frac{\alpha\theta^2}{2}+\beta c_2.
\end{equation}
Clearly then, those individuals with $w_i\in[0,\alpha\theta]$ such that 
\begin{eqnarray}\nonumber
\theta w_i-\frac{\alpha\theta^2}{2}+\beta c_2\geq\frac{w_i^2}{2\alpha}+\beta c_1&\Leftrightarrow & \frac{w_i^2}{2\alpha}-\theta w_i+\frac{\alpha\theta^2}{2}\leq \beta (c_2-c_1)\\\nonumber
&\Leftrightarrow& \displaystyle\frac{(w_i-\alpha \theta)^2}{2\alpha}\leq \beta(c_2-c_1)\\\nonumber
&\Leftrightarrow& \mid w_i-\alpha \theta \mid\leq \sqrt{2\alpha \beta (c_2-c_1)}\\\nonumber
&\Leftrightarrow & -\sqrt{2\alpha \beta (c_2-c_1)}\leq w_i-\alpha \theta\\\nonumber
&& \leq \sqrt{2\alpha \beta (c_2-c_1)},
\end{eqnarray}
will play $\theta$. Since we are considering $w_i<\alpha \theta$, agents with $w_i\in[u,\alpha\theta]$ where:
\begin{equation}\label{eq: u proof}
u=\left[\theta\alpha-\sqrt{2\alpha\beta c_2-c_1}\right]_+
\end{equation}
will play $\theta$. Others with $w_i<u$ will play $a_i=w_i/\alpha$. Hence,
\begin{equation}
a^*(w_i;G)\coloneqq\left\{\begin{array}{ll}\displaystyle\frac{w_i}{\alpha},&\text{if } w_i\in[0,u), \\
\theta,&\text{if } w_i\in[u,\alpha\theta).\end{array}\right.
\end{equation}
Case 2: Consider now an individual with $w_i\geq \alpha \theta$. Her nonreputational utility function is depicted in Figure \ref{fig:test2}.
\begin{figure}[ht]
\centering
\includegraphics[scale=0.7]{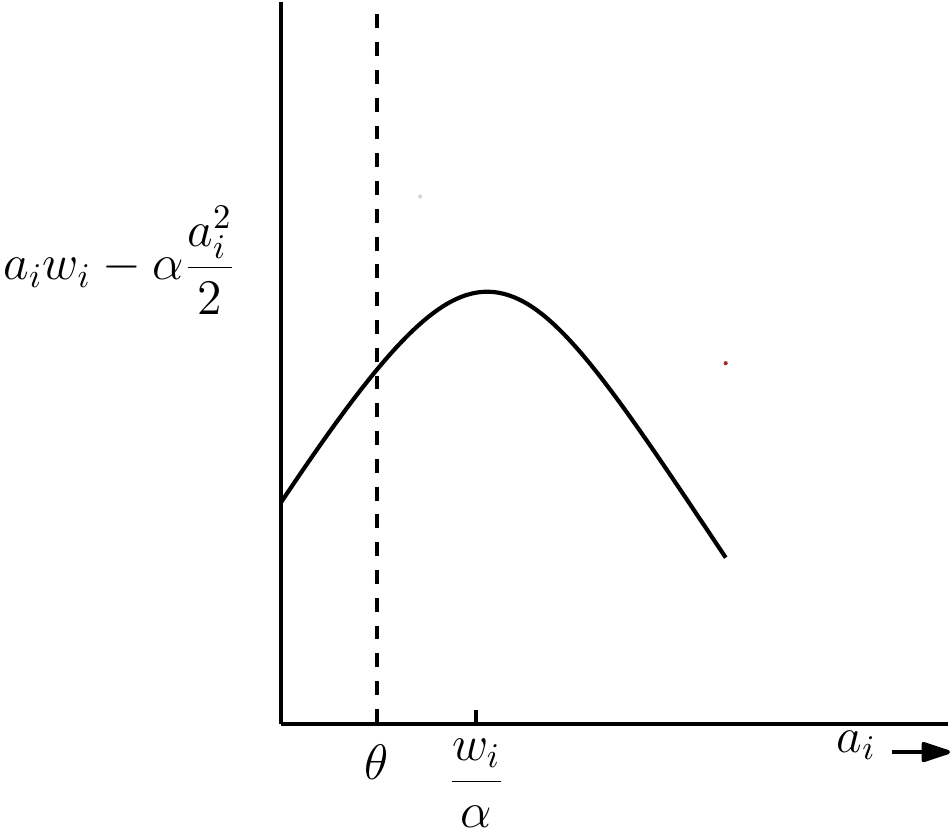}
\caption{Case 2: $\theta\leq\frac{w_i}{\alpha}$}
\label{fig:test2}
\end{figure}
Since $c_1$ is a constant independent of $a_i$ for $a_i<\theta$, and since $c_2>c_1$, it is optimal for her  not to play any $a_i$ smaller than $\theta$. It is also clear that if $a_i\geq\theta$, she   should pick $a_i=\frac{w_i}{\alpha}$.  Hence $a^*(w_i;G)=w_i/\alpha$ for all $i$ with $w_i\geq \alpha \theta$.

Finally, summarizing both cases, we get:
\begin{equation}
a^*(w_i;G)\coloneqq\left\{\begin{array}{ll}
\theta,&\text{if } w_i\in[u,\alpha\theta),\\
\displaystyle\frac{w_i}{\alpha},&\text{otherwise, } 
\end{array}\right.
\end{equation}
with $u$ as in \eqref{eq: u proof}. This proves Proposition \ref{prop: best response type-B}.

\end{proof}

The next corollary provides the distribution of the best response profile $a^*(w_i;G)$, and can be easily deduced from the previous Proposition \ref{prop: best response type-B}. 

\begin{corollary} Let $G$ be an action profile. Under assumptions A and $\Delta_B(G)\geq 0$, the best response profile $TG$ is:
\begin{equation}
TG(a)\coloneqq\left\{\begin{array}{lll}
F(\alpha a) &\text{if}& a\in[0,u/\alpha),\\
F(u) &\text{if}& a\in[u/\alpha,\theta),\\
F(\alpha a)&\text{if}& a\in[\theta,+\infty),
\end{array}\right.
\end{equation}
where $u$ is given in \eqref{eq: u best response} and depends on $G$ through $\Delta_B(G)$.
\end{corollary}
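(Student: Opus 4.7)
The plan is to read off $TG$ directly from the best-response characterization in Proposition \ref{prop: best response type-B}. By definition $TG(a)$ is the probability that an agent's best response does not exceed $a$, and since $w_i$ has cdf $F$, I can write $TG(a) = \mathbb{P}(a^*(w_i;G) \le a)$ and split the event according to the three regions $[0,u)$, $[u,\alpha\theta)$, $[\alpha\theta,\infty)$ given by Proposition \ref{prop: best response type-B}.

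First I would dispose of the easy extremes. On $[0,u)$ and on $[\alpha\theta,\infty)$ the best response is the affine map $w_i \mapsto w_i/\alpha$, while on $[u,\alpha\theta)$ the best response is the constant $\theta$, creating an atom of mass $F(\alpha\theta)-F(u)$ at $a=\theta$. For $a \in [0,u/\alpha)$, the only agents with $a^*(w_i;G) \le a$ are those with $w_i/\alpha \le a$ in the low region (agents with $w_i \in [u,\alpha\theta)$ play $\theta>a$, and agents with $w_i \ge \alpha\theta$ play at least $\theta>a$), which gives $TG(a)=F(\alpha a)$. For $a \in [u/\alpha,\theta)$, every agent in the low region $[0,u)$ qualifies (they play at most $u/\alpha \le a$), no agent from $[u,\alpha\theta)$ qualifies (they play $\theta>a$), and no agent from $[\alpha\theta,\infty)$ qualifies (they play at least $\theta>a$); hence $TG(a)=F(u)$.

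The remaining case $a \in [\theta,\infty)$ is where I need to be careful because of the atom at $\theta$. All agents with $w_i<u$ play $w_i/\alpha < u/\alpha \le \theta \le a$ and contribute $F(u)$; all agents with $w_i \in [u,\alpha\theta)$ play $\theta \le a$ and contribute $F(\alpha\theta)-F(u)$; and agents with $w_i \ge \alpha\theta$ contribute $F(\alpha a)-F(\alpha\theta)$, the ones with $w_i/\alpha \le a$. Summing the three pieces telescopes to $F(\alpha a)$, which is the claimed expression.

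The only genuine subtlety is bookkeeping at the boundary points $u/\alpha$ and $\theta$: the cdf is right-continuous, so I must check that the definition on the half-open intervals is consistent with the jump $F(\alpha\theta)-F(u)$ at $\theta$ and with the absence of any jump at $u/\alpha$ (where the two expressions $F(\alpha\cdot(u/\alpha))=F(u)$ agree). Once this is verified, the corollary follows immediately from Proposition \ref{prop: best response type-B}, and the dependence of $u$ on $G$ through $\Delta_B(G)$ via \eqref{eq: u best response} is inherited from that proposition.
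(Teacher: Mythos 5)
Your proposal is correct and is precisely the deduction the paper intends: the corollary is stated as an immediate consequence of Proposition \ref{prop: best response type-B}, obtained by pushing the cdf $F$ of $w_i$ through the piecewise best-response map, exactly as you do (including the atom of mass $F(\alpha\theta)-F(u)$ at $a=\theta$ and the consistency check at $a=u/\alpha$). No discrepancy with the paper's approach.
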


\begin{figure}[ht]
\centering
\includegraphics[scale=0.65]{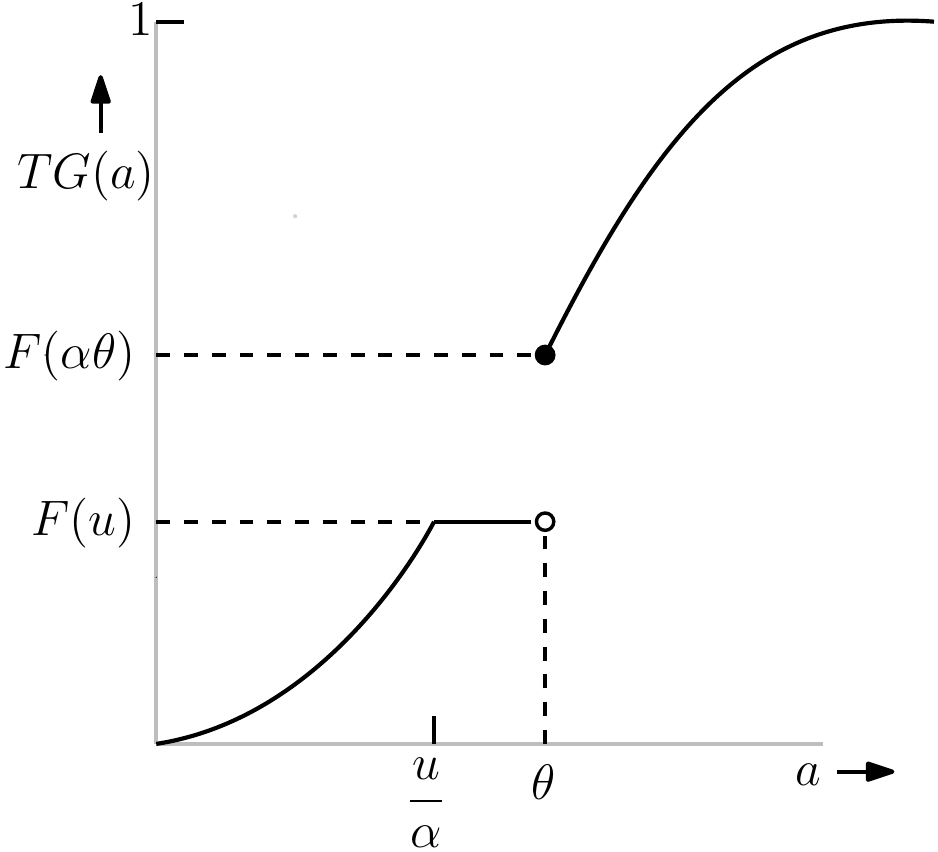}
\caption{Distribution of $a^*(w_i;G)$ under the type-B feedback. A segment of the population increase their prosocial action level to $\theta$.}
\label{fig:Type_B_Distribution}
\end{figure}

The distribution of the best response of the agents is depicted in Figure \ref{fig:Type_B_Distribution}.
 The shape $TG$ has the following properties. Firstly we note that if $w_i>\alpha \theta$, the prosocial action of agent $i$ is equal to $\frac{w_i}{\alpha}$ which is the same as in the case when the reputation benefit is not part of her utility ($\beta=0$). Also we can notice that by playing $\frac{w_i}{\alpha}$, she will reveal to the  service provider  her intrinsic motivation, since the  service provider  knows $\alpha$. In the case of $w_i<\alpha \theta$, two situations can occur. If $w_i<u$, then again the best response of agent $i$ will be to choose $\frac{w_i}{\alpha}$ and the same conclusion can be applied. On the other hand, if $w_i\in[u,\alpha \theta)$, then agent $i$ chooses $a_i^*(w_i;G)=\theta$ which implies that now her reputation is equal to $\mathbb{E}[w_i\mid L_B(\cdot),a_i\geq\theta]$. From this observation, we can deduce that, first, the agent makes the bare minimum effort to go into the next reputation category and, second, the agent does not fully reveal $w_i$ to the  service provider \footnote{Note that the service provider knows the actions, and from $a_i=\theta$ can infer that $w_i\in[\frac{u}{\alpha},\theta)$. However, the other agents only get to see $1_{a_i\geq \theta}$. The increased reputation benefit stems from this economical feedback to the other agents.}. This pattern of jump in prosocial actions for a segment of the population has been also observed in donations \cite{tirole2017economics}, when there are categories of donations. Indeed, the amount of donations are not distributed according to the uniform distribution but rather according to a  multimodal distribution where each mode corresponds to the minimum amount of donation needed to enter into a corresponding category. All others have either (1) sufficiently high intrinsic motivation ($w_i>\alpha\theta$) that playing lower without losing the reputational benefit only results in lower intrinsic benefit and so lower utility, or (2) sufficiently low intrinsic motivation ($w_i<u$) that playing higher will result in higher cost.
 
We assume that $F$ has a density $f(\cdot)$ and a finite expectation $\mathbb{E}[w_i]$. Define $Tc_1$ and $Tc_2$ as the $c_1$ and $c_2$ associated with $TG$ (See paragraph before Proposition \ref{prop: best response type-B}). These are then:
\begin{eqnarray} \label{eq:TC1}
Tc_1&\coloneqq&\frac{\int_0^uwf(w) \,dw}{F(u) }=u-\frac{\int_0^uF(w) \,dw}{F(u) }.\\\label{eq:TC2}
Tc_2&\coloneqq&\left\{\begin{array}{lll}\displaystyle\frac{\int_{u}^{+\infty} wf(w) \,dw}{1-F(u)}=\displaystyle u+\frac{\int_{u}^{+\infty} [1-F(w)] \,dw}{1-F(u)},&\text{if}& F(u)<1,\;\;\;\\
u &\text{if}& F(u)=1,
\end{array}\right.
\end{eqnarray}
where \eqref{eq:TC2} holds because the distribution $F$ is assumed to have finite mean. Thus,
 \begin{equation}
 \Delta_B(TG)=Tc_2-Tc_1 = 
 \left\{
 \begin{array}{lll}
 \displaystyle\frac{\int_{u}^{+\infty} [1-F(w)] \,dw}{1-F(u)}+\frac{\int_0^uF(w) \,dw}{F(u) },&\text{if} &F(u)<1\\
 u-\mathbb{E}[w_i],&\text{if}& F(u)=1.
 \end{array}\right.
\end{equation}
 At equilibrium, $TG^*=G^*$ and hence $\Delta_B(TG^*)=\Delta_B(G^*)$. Since $\Delta_B(TG)$ depends on only $F$ (known, fixed) and $u$ (to be determined), let us write:
\begin{equation}\label{eq:xb}
X_B(u)\coloneqq\left\{\begin{array}{ll}
\displaystyle\frac{\int_{u}^{+\infty} [1-F(w)] \,dw}{1-F(u)}+\frac{\int_0^uF(w) \,dw}{F(u) },&\text{ if }F(u)<1,\\
u-\mathbb{E}[w],&\text{ if } F(u)=1.
\end{array}\right.
\end{equation} 
Taking the derivative with respect to $u$ and simplifying, we get
\begin{equation}\label{eq:x'b}
 X_B'(u) = \left\{\begin{array}{lll}
\displaystyle f(u)\left[\frac{\int_{u}^{+\infty} [1-F(w)] \,dw}{(1-F(u))^2}-\displaystyle\frac{\int_0^uF(w) \,dw}{F(u)^2 }\right]&\text{if}&F(u)<1,\\
1,&\text{if}& F(u)=1.
\end{array}\right.
\end{equation}
From \eqref{eq:xb} and \eqref{eq:x'b}, we can conclude that $\lim_{u\downarrow 0} X_B(u)=\mathbb{E}[w]$, that $X_B(u)\geq u-\mathbb{E}[w]-\delta$ for a $\delta > 0$ and all $u$ sufficiently large and hence  $\lim_{u\uparrow +\infty} X_B(u)=+\infty$. Furthermore, we can also conclude that $X_B(u)$ is differentiable for all $u$ with $F(u)<1$, and further $X_B(u)$ continuous for all $0\leq u<+\infty$. From the relation in \eqref{eq: u best response} and the condition $\Delta_B(TG^*)=\Delta_B(G^*)$, we see that the intersection (or lack thereof) of the curves $X_B(u)$ in \eqref{eq:xb} and $(u - \alpha\theta)^2/(2 \alpha)$ will determine the equilibria. 

Two cases can occur. The first case is when there is no intersection between  $\beta X_B(u)$ and $(u - \alpha\theta)^2/(2 \alpha)$. In this case, $\beta X_B(u)$ is always greater, and from an equilibrium perspective, $u=0$ in \eqref{eq: u best response}, and all agents such that $w_i<\alpha\theta$ will play $\theta$. The rest of the agents will play $\frac{w_i}{\alpha}$. The second case is when $\beta X_B(u)$ and $(u - \alpha\theta)^2/(2 \alpha)$ intersect for some $u<\alpha \theta$, as in Figure \ref{fig: unique equilibrium type B}.
\begin{figure}[t]
\centering
\includegraphics[scale=0.65]{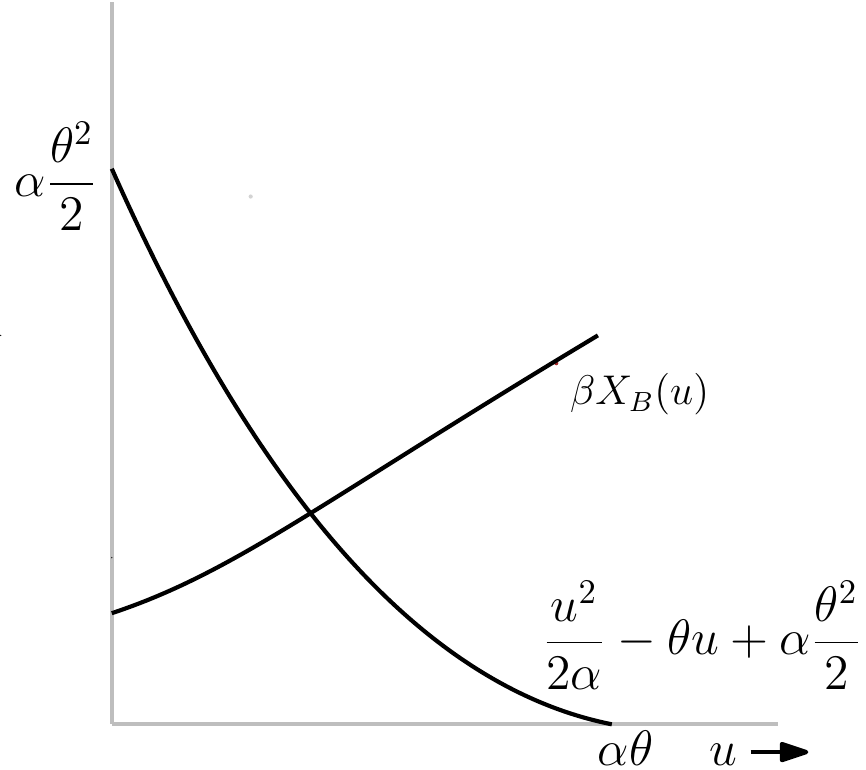}
 \caption{Unique equilibrium for type-B feedback}
 \label{fig: unique equilibrium type B}
\end{figure}
Clearly, if $F$ has a nontrivial density, then $X_B(\alpha\theta)>0$. (If $F(\alpha\theta)=1$, then $\mathbb{E}[w]<\alpha \theta$ since $F$ has a nontrivial density and $X_B(\alpha\theta)>0$ from the second case in \eqref{eq:xb}. If $F(\alpha\theta)<1$, then first formula in \eqref{eq:xb} applies and $X_B(\alpha \theta) \geq \left( \int_0^{\alpha\theta}F(w) \,dw \right) / F(\alpha\theta) >0$.) A sufficient condition for an interior $u^{MFE}$ is then $\theta>\sqrt{2\beta\mathbb{E}[w] / \alpha}$. We have thus established:

\begin{theorem} \label{cor:suffniquetypeB} A mean field equilibrium always exists. Moreover, the following hold:
\begin{enumerate}
\item If  $\beta X_B(u)$ and $(u - \alpha \theta)^2/(2\alpha)$ intersect at $u^{MFE}\in[0,\alpha\theta)$, then a mean field equilibrium exists with $a_i^*(w_i)$ as given in \eqref{eq:a type b} and $u^{MFE}$ solution of \eqref{eq: u best response}. 
\item If $\beta X_B(u)$ is greater than $(u - \alpha \theta)^2/(2\alpha)$ for all $u\in[0,\alpha\theta]$, then all agents such that $w_i<\alpha\theta$ will play $\theta$ and the rest of the agents will play $\frac{w_i}{\alpha}$.
\item If $\theta>\sqrt{\frac{2\beta\mathbb{E}[w]}{\alpha}}$, there is at least one intersection for the curves $\beta X_B(u)$ and $\frac{(u-\alpha \theta)^2}{2 \alpha}$.
\end{enumerate}
\end{theorem}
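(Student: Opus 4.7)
The plan is to reduce the equilibrium problem to a scalar fixed point in the threshold $u$. Proposition~\ref{prop: best response type-B} shows that, whenever $\Delta_B(G) \geq 0$, the best response to any profile $G$ is entirely parametrized by $u(G) = [\alpha\theta - \sqrt{2\alpha\beta \Delta_B(G)}]_+$, and the resulting profile $TG$ is described by the accompanying corollary. Computing $\Delta_B(TG)$ using the explicit formulas \eqref{eq:TC1}--\eqref{eq:TC2} shows that it depends on $G$ only through $u(G)$, and in fact equals $X_B(u(G))$ from \eqref{eq:xb}. The requirement $TG^* = G^*$ at equilibrium thus reduces to $\Delta_B(G^*) = X_B(u^{MFE})$; squaring \eqref{eq: u best response} turns this into the one-dimensional equation
\[
\frac{(u^{MFE} - \alpha\theta)^2}{2\alpha} = \beta X_B(u^{MFE})
\]
on $[0,\alpha\theta)$, together with the boundary alternative $u^{MFE} = 0$ when the right-hand side dominates.

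Parts (1) and (2) should then follow immediately. For (1), any intersection $u^{MFE}$ feeds back into \eqref{eq:a type b} to produce a candidate $G^*$; the condition $\Delta_B(G^*) \geq 0$ required by Proposition~\ref{prop: best response type-B} is automatic since $X_B \geq 0$ (read off from \eqref{eq:xb}), and $TG^* = G^*$ holds by construction. For (2), if $\beta X_B$ strictly dominates the quadratic on $[0,\alpha\theta]$, the expression inside the positive part in \eqref{eq: u best response} is always negative, forcing $u^{MFE} = 0$; the best-response formula then says exactly that every agent with $w_i < \alpha\theta$ plays $\theta$ and the others play $w_i/\alpha$. The overall existence claim is then clear: either the two curves meet somewhere in $[0,\alpha\theta)$ (yielding an equilibrium of type (1)) or they do not (yielding one of type (2)).

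For part (3), I will apply the intermediate value theorem to $\varphi(u) \coloneqq \beta X_B(u) - (u-\alpha\theta)^2/(2\alpha)$ on $[0,\alpha\theta]$. The limit $\lim_{u \downarrow 0} X_B(u) = \mathbb{E}[w]$ derived just before the statement gives $\varphi(0) = \beta\mathbb{E}[w] - \alpha\theta^2/2$, which is strictly negative exactly under the hypothesis $\theta > \sqrt{2\beta\mathbb{E}[w]/\alpha}$. At the other endpoint $\varphi(\alpha\theta) = \beta X_B(\alpha\theta) > 0$, by the case analysis recalled in the paragraph preceding the theorem. The continuity of $X_B$ on $[0,+\infty)$, already established there, then supplies an interior root. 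The main subtlety I expect throughout is the bookkeeping around the two regimes ($F(u) < 1$ versus $F(u) = 1$) in the definition of $X_B$ and verifying continuity across the transition, which uses the finite-mean assumption on $F$ but is otherwise routine.
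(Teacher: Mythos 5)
Your proposal is correct and follows essentially the same route as the paper: the paper also reduces the equilibrium condition $TG^*=G^*$ to the scalar equation $\beta X_B(u)=(u-\alpha\theta)^2/(2\alpha)$ via the formulas for $Tc_1,Tc_2$, treats the no-intersection case as the boundary solution $u^{MFE}=0$, and obtains part (3) from $\beta X_B(0^+)=\beta\mathbb{E}[w]<\alpha\theta^2/2$ together with $X_B(\alpha\theta)>0$ and continuity of $X_B$. Your explicit intermediate-value-theorem phrasing of part (3) and the remark that $X_B\geq 0$ validates the hypothesis of Proposition~\ref{prop: best response type-B} are just slightly more spelled-out versions of what the paper does in the paragraphs preceding the theorem.
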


Uniqueness of the mean field equilibrium can be ensured if $\beta X_B(u)$ and $\frac{u^2}{2\alpha}-\theta u+\alpha \frac{\theta^2}{2}$ have a unique intersection. This is the case if $\beta X_B(u)$ is increasing in $u$.

\section{Shaping of feedback.}\label{sec: shaping of feedback}
We now consider two examples to highlight the need for a systematic study of information design. 
\subsection{Being economical with the truth can improve the net prosocial action.}
In type-A feedback the service provider revealed all the actions. In type-B feedback, the service provider revealed only the categories (more prosocial or less prosocial) to which an individual belonged. Let $W_A$ and $W_B$ denote the net prosocial actions of the population under type-A feedback and type-B feedback respectively. In the next proposition, we will give conditions such that $W_B$ is greater than $W_A$.
\begin{proposition}\label{proposition: W_B> W_A}
If $f(w)$ is decreasing in $w\in[0,+\infty)$ and if there exist $\theta$, $\alpha$ and $\beta$ such that the following conditions are satisfied:
\begin{eqnarray}\label{eq:condition 1 W_B >W_A}
\int_0^{\alpha \theta}\big(\theta-\frac{w}{\alpha}\big)f(w)\,dw&>& \beta,\\\label{eq:condition 2 W_B >W_A}
\beta \mathbb{E}[w] >\frac{\alpha \theta^2}{2},
\end{eqnarray}
then $W_B>W_A$.
\end{proposition}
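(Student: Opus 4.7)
\medskip
\noindent\textbf{Proof plan.} The plan is to bound $W_A$ from above using Theorem~\ref{prop:fixed point type c}, to express $W_B$ in closed form using Proposition~\ref{prop: best response type-B}, and to show that the two conditions \eqref{eq:condition 1 W_B >W_A}--\eqref{eq:condition 2 W_B >W_A}, together with the monotonicity of $f$, force the equilibrium threshold $u^{MFE}$ to equal zero; after that the conclusion is immediate.

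Concretely, since $1-e^{-a/\beta}<1$, Theorem~\ref{prop:fixed point type c} yields $a_i^A\leq w_i/\alpha+\beta$, and hence $W_A\leq \mathbb{E}[w]/\alpha+\beta$. Proposition~\ref{prop: best response type-B} in turn gives
\begin{equation*}
W_B=\frac{\mathbb{E}[w]}{\alpha}+\int_{u^{MFE}}^{\alpha\theta}\!\left(\theta-\frac{w}{\alpha}\right)f(w)\,dw,
\end{equation*}
so once I show $u^{MFE}=0$, condition \eqref{eq:condition 1 W_B >W_A} will immediately give $W_B-W_A\geq \int_0^{\alpha\theta}(\theta-w/\alpha)f(w)\,dw-\beta>0$.

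The crux is therefore to establish $u^{MFE}=0$. For this I appeal to case~2 of Theorem~\ref{cor:suffniquetypeB}: it suffices to show $\beta X_B(u)>(u-\alpha\theta)^2/(2\alpha)$ for every $u\in[0,\alpha\theta]$. The right-hand side is maximized at $u=0$ with value $\alpha\theta^2/2$, and since $X_B(0)=\mathbb{E}[w]$, condition \eqref{eq:condition 2 W_B >W_A} is precisely the $u=0$ version of this inequality. Hence everything reduces to the monotonicity-type bound $X_B(u)\geq \mathbb{E}[w]$ on $[0,\alpha\theta]$. Using the tower identity $\mathbb{E}[w]=F(u)\mathbb{E}[w\mid w\leq u]+(1-F(u))\mathbb{E}[w\mid w>u]$ to rewrite \eqref{eq:xb}, this further reduces to
\begin{equation*}
F(u)^2\,\mathbb{E}[w]\;\geq\;\int_0^u w\,f(w)\,dw\qquad\text{for all } u\in[0,\alpha\theta].
\end{equation*}

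This last inequality is what I expect to be the main technical obstacle. My plan is to exploit the concavity of $F$ (which follows from $f$ being non-increasing) through three elementary estimates: the chord bound $F(w)\geq (w/u)F(u)$ on $[0,u]$, the tangent bound $F(w)\leq F(u)+f(u)(w-u)$ on $[u,\infty)$, and $F(u)\geq u f(u)$. These control $\int_0^u F(w)\,dw$ from above and $\int_u^\infty(1-F(w))\,dw$ from below in a way that, after rearrangement, yields the displayed inequality. A direct derivative check does not succeed since $\frac{d}{du}\bigl(F(u)^2\mathbb{E}[w]-\int_0^u w f(w)\,dw\bigr)=f(u)(2F(u)\mathbb{E}[w]-u)$ can change sign (as for the exponential density), which is why a chord-plus-tangent route is the natural approach.
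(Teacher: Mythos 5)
Your argument is correct and follows the same skeleton as the paper's proof: bound $W_A\leq\mathbb{E}[w]/\alpha+\beta$ via Theorem~\ref{prop:fixed point type c}, write $W_B-\mathbb{E}[w]/\alpha=\int_{u^{MFE}}^{\alpha\theta}(\theta-w/\alpha)f(w)\,dw$, force $u^{MFE}=0$, and close with \eqref{eq:condition 1 W_B >W_A}. The one place you genuinely diverge is the step that makes the $u=0$ inequality \eqref{eq:condition 2 W_B >W_A} propagate to all $u\in[0,\alpha\theta]$: the paper invokes the cited fact that a decreasing density makes $X_B$ increasing, so $\beta X_B(u)\geq\beta\lim_{v\downarrow 0}X_B(v)=\beta\mathbb{E}[w]>\alpha\theta^2/2\geq(u-\alpha\theta)^2/(2\alpha)$, whereas you prove only the weaker (and sufficient) bound $X_B(u)\geq\mathbb{E}[w]$, correctly reduced to $F(u)^2\mathbb{E}[w]\geq\int_0^u wf(w)\,dw$. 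Your concavity plan does close, but the order of the estimates matters: rewrite the target as $\frac{F(u)}{1-F(u)}\int_u^{\infty}(1-F)+\frac{1+F(u)}{F(u)}\int_0^{u}F\geq u$; the tangent bound gives $\int_u^{\infty}(1-F)\geq(1-F(u))^2/(2f(u))$, the chord bound gives $\int_0^{u}F\geq uF(u)/2$, and $F(u)\geq uf(u)$ turns the two terms into at least $u(1-F(u))/2$ and $u(1+F(u))/2$, summing to $u$. (Applying the same bounds naively to $F(u)^2\mathbb{E}[w]\geq\int_0^u wf(w)\,dw$ in raw form is too lossy, so the rearrangement is essential and should be written out.) Your route is thus self-contained where the paper's is a citation, at the cost of not getting monotonicity of $X_B$. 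Two small corrections: the chord bound lower-bounds $\int_0^uF(w)\,dw$, which is what you need (you wrote ``from above''; what is bounded from above is $\int_0^u wf(w)\,dw=uF(u)-\int_0^uF(w)\,dw$); and your claim that a direct derivative check fails is mistaken --- $H'(u)=f(u)\bigl(2F(u)\mathbb{E}[w]-u\bigr)$ changes sign at most once, from $+$ to $-$, because $2F(u)\mathbb{E}[w]-u$ is concave, vanishes at $0$, and has nonnegative initial slope (any decreasing density satisfies $2f(0)\mathbb{E}[w]\geq1$), so $H(0)=\lim_{u\to\infty}H(u)=0$ already forces $H\geq0$; that is arguably the shortest self-contained route.
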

\begin{proof}
If $f(w)$ is decreasing then $X_B(u)$ is increasing for $u\in(0,\alpha\theta)$ (see \cite{jewitt2004notes}). 
If $\beta\mathbb{E}[w]=\lim_{u\rightarrow 0}\beta X_B(u)>\frac{\alpha\theta^2}{2}$, then $u$ is always is equal to 0 according to Proposition \ref{prop: best response type-B}. Observe from Theorem \ref{cor:suffniquetypeB} that $W_A\leq \frac{\mathbb{E}[w]}{\alpha}+\beta$. Hence,
\begin{eqnarray}\nonumber
W_B-W_A&\geq&W_B-\frac{\mathbb{E}[w]}{\alpha}-\beta\\\nonumber
&\geq&\int_0^u\frac{w}{\alpha}f(w)\, dw+\theta\int_u^{\alpha\theta}f(w)\, dw+\int_{\alpha \theta}^{+\infty}\frac{w}{\alpha}f(w)\,dw-\frac{\mathbb{E}[w]}{\alpha}-\beta\\\nonumber
&=&\theta\int_u^{\alpha\theta}f(w)\,dw-\int_u^{\alpha\theta}\frac{w}{\alpha}f(w)\,dw-\beta>0.
\end{eqnarray}
where the last equality follows from \eqref{eq:condition 1 W_B >W_A}. This concludes the proof.
\end{proof}
We now provide an example where the conditions of Proposition \ref{proposition: W_B> W_A} are satisfied. 
In Figure \ref{fig:condition area}, we first understand why these two conditions are in opposition. Indeed, if we want \eqref{eq:condition 1 W_B >W_A} to be satisfied we will need a small $\beta$ so that the area of the crossed region is at least $\beta$. But if $\beta$ is too small we will reduce the area of the square dotted region which will affect \eqref{eq:condition 2 W_B >W_A}. Similar conclusions can be drawn for $\theta$. By increasing $\theta$, it will be easier to satisfy \eqref{eq:condition 1 W_B >W_A}, but more difficult to satisfy \eqref{eq:condition 2 W_B >W_A}. 

Let us consider that the density $f(w)=\frac{k}{\lambda}(\frac{w}{\lambda})^{k-1}e^{-(\frac{w}{\lambda})^k}$  (Weibull distribution) with a scale parameter ($\lambda$) equal to $1$ and shape parameter ($k$) equal to $0.5$. When the shape parameter is lower than $1$, $f(w)$ is decreasing in $w$. Take $\alpha=1$. The results of a numerical simulation in Figure \ref{fig:conditions}, indicate that the pairs $(\beta,\theta)$ in the darkened area satisfy the conditions of Proposition \ref{proposition: W_B> W_A}. For these parameters, being economical with the feedback improves the net prosocial action.  
\begin{figure}[ht]
    \centering
    \subfigure[Graphical illustration of \eqref{eq:condition 1 W_B >W_A} and \eqref{eq:condition 2 W_B >W_A} when $\alpha=1$.]{\includegraphics[width = 0.5\textwidth]{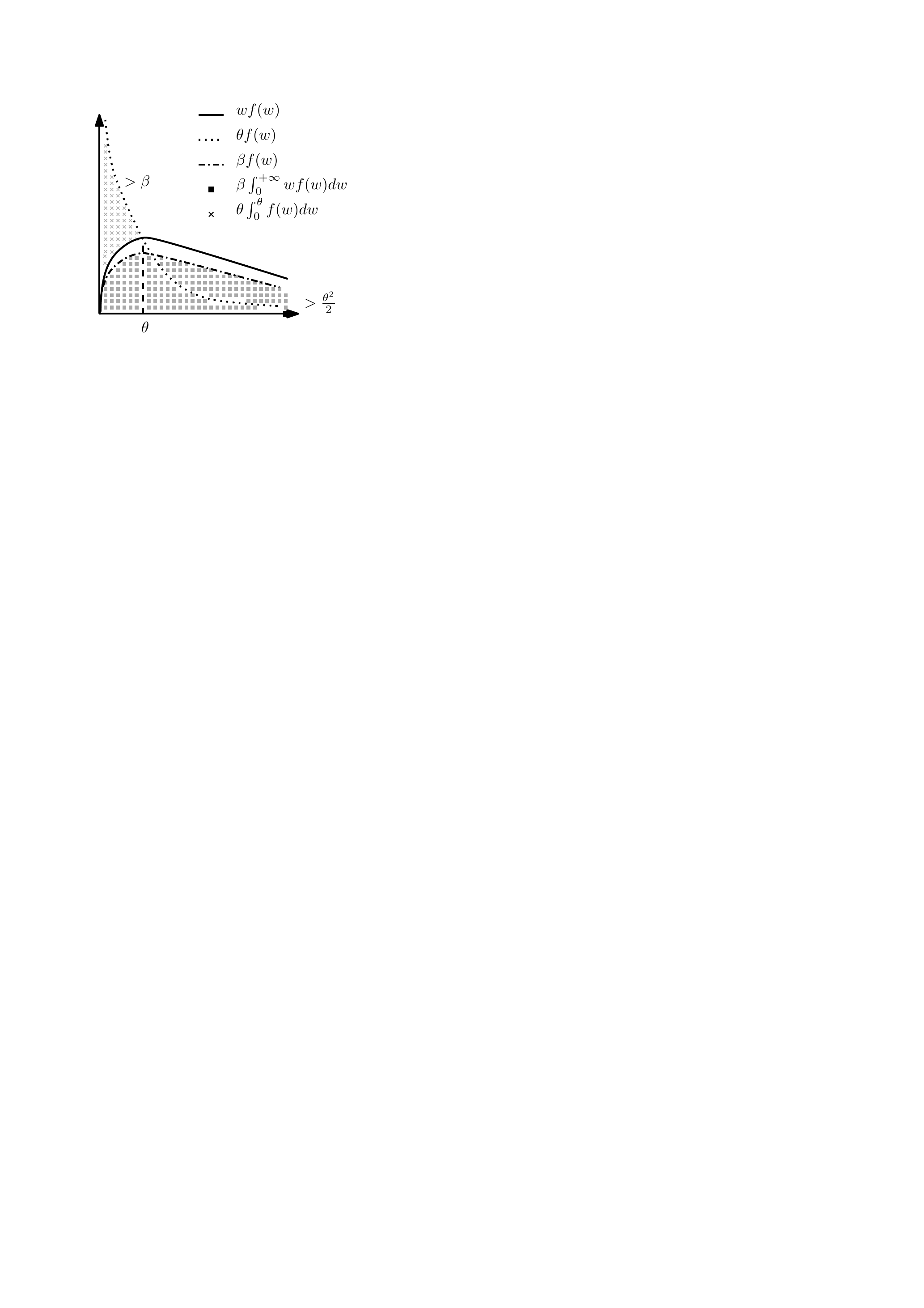}\label{fig:condition area}}
    \hspace{1cm}
    \subfigure[Parameters $\beta$ and $\theta$ such that the conditions of  Proposition \ref{proposition: W_B> W_A} are satisfied for a Weibull distribution.]{\includegraphics[width = 0.4\textwidth]{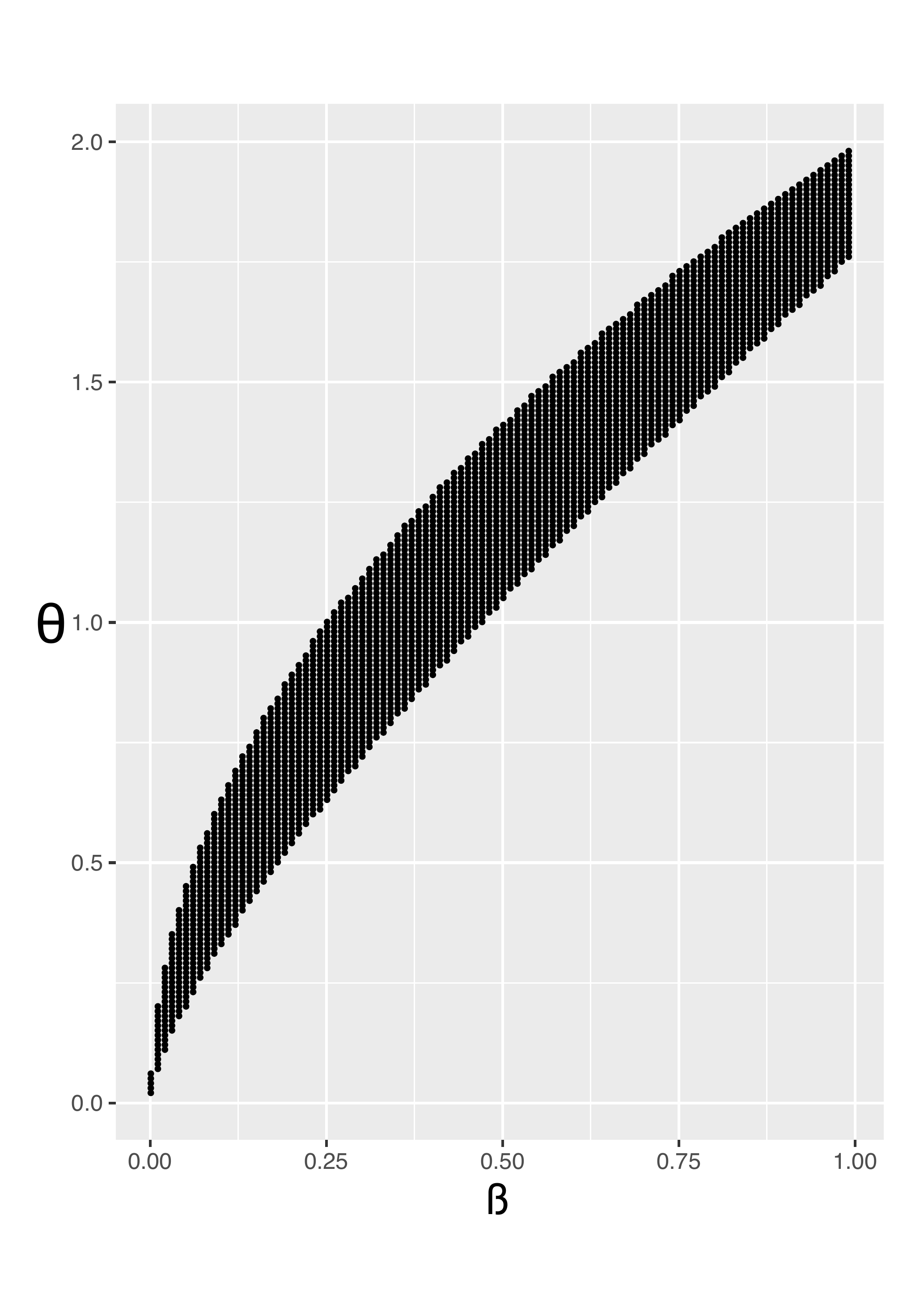}\label{fig:conditions}}
    \caption{Situations when $W_B > W_A$.}
\end{figure}



\subsection{The uniform distribution case and an explicit optimization.}


In this subsection, we study the example when $F$ has the uniform distribution over $[0,1]$. Under this assumption, an agent's intrinsic motivation is her rank in the society, i.e., $F(w_i) = w_i = i$. The utility of agent $i$ can be rewritten as follows:
\begin{equation}
U(a_i,i,a_{-i})=a_i i -C(a_i)+\beta\mathbb{E}[i\mid\{L(a_k)\}_{k\in\mathcal{I}}].
\end{equation} 

For simplicity, assume $\alpha \beta<1$. Since we will be interested in making individuals jump to more prosocial actions, from Figure \ref{fig:Type_B_Distribution}, we have that $\alpha \theta \leq 1$.

 It can be easily checked that $X_B(u) \equiv 0.5$. The mean field is determined by the intersection between a constant function $\beta X_B(u) \equiv \beta 0.5$ and the decreasing polynomial function $(u - \alpha \theta)^2/(2 \alpha)$ in the interval $u \in [0, \alpha\theta]$. 
 If there is no intersection then $u^{MFE}=0$. Thus $u^{MFE}=[\theta\alpha-\sqrt{\alpha\beta}]_+$. Furthermore, by Theorem \ref{cor:suffniquetypeB}:
\begin{equation}
a^*(i)\coloneqq\left\{\begin{array}{ll}\displaystyle {i/\alpha}&\text{if } i\in[0,[\alpha\theta-\sqrt{\alpha\beta}]_+) ~ \cup ~ [\alpha\theta,1] \\
\theta&\text{otherwise.}\end{array}\right.
\end{equation}
The net prosocial action under type-B feedback is given by:
\begin{eqnarray}\nonumber
 W_B&=& \displaystyle\frac{1}{2\alpha}[w^2]_0^{[\alpha\theta-\sqrt{\alpha\beta}]_+}+\theta(\alpha\theta-[\alpha\theta-\sqrt{\alpha\beta}]_+)+\frac{1}{2\alpha}[w^2]_{\alpha\theta}^1\\ \nonumber \\
&=&\left\{\begin{array}{lll}\displaystyle\frac{1}{2\alpha}+\frac{\alpha\theta^2}{2}&\text{if}&\theta\leq\sqrt{\beta/\alpha}\vspace{0.5 cm}
\\
\displaystyle \frac{\alpha\beta+1}{2\alpha}&\text{if}&\theta>\sqrt{\beta/\alpha}
\end{array}\right. \\\nonumber\\\nonumber\\
& =& \frac{1}{2\alpha} + \frac{1}{2} \min \left\{ \alpha \theta^2,  \beta \right\}.
\end{eqnarray}
Note that if $\theta\leq\sqrt{\beta/\alpha}$, then  $W_B$ is increasing in $\theta$ else $W_B$ is independent of $\theta$. Therefore the optimal $\theta$ to use will be any $\theta$ that satisfies $\frac{1}{\alpha} \geq \theta \geq \sqrt{\frac{\beta}{\alpha}}$. 

\section{A new feedback for trading off privacy for efficiency.}\label{sec: privacy} 
Our initial intuition was that type-A feedback will maximize the level of aggregate prosocial action. Surprisingly, as highlighted in Figure \ref{fig:conditions} for the Weibull distribution, this is not always the case. Additionally, type-A feedback may not satisfy the privacy concerns of agents. In this section, we will first define a privacy measure associated with a feedback. We then extend type-B feedback by allowing multiple thresholds. Finally, we propose an optimal information design problem.   
\subsection{A privacy measure.}

Our measure of privacy is the extent to which the society is uncertain about an agent's intrinsic motivation, averaged across the population. Recall that the society observes $L(a_i^*(w_i))$, therefore $\mathbb{E}[w_i\mid L(\cdot)]$ is the minimum mean squared error estimate of $w_i$. We define the privacy measure as the mean square error over the population:
\begin{equation}
V(L)=\int_0^{+\infty}(w_i-\mathbb{E}[w_i\mid L(\cdot)])^2 f(w_i)\,dw_i.
\end{equation}
When $V(L)=0$, the society is able to infer precisely the true value of $w_i$ for each agent $i$. The higher $V(L)$ the greater is our measure of privacy.

\subsection{Type-Bm feedback (multi-threshold type-B feedback).}

We explore the following feedback:
\begin{equation}
    L_{Bm}(x)= n-1 \mbox{ if } x\in[\theta_{n-1},\theta_n),
\end{equation}
with $0=\theta_0<\theta_1<\cdots<\theta_N=+\infty$. Let us consider the following candidate equilibrium, where the strategy of agent $i$, if $w_i\in[v_{n-1},v_n)$, is given by:
\begin{align}\label{eq:payofftype_bm}
\text{arg } \max_{a_i\in[\theta_{n-1},\theta_n) }\left\{a_iw_i-C(a_i)+\beta\mathbb{E}[w_i\mid L_{Bm}(\cdot),a_i\in [\theta_{n-1},\theta_n)]\right\},   
\end{align}
with $0=v_0<v_1<\cdots<v_{\max}$, where $v_{\max}$ is the smallest $v$ with $F(v)=1$.
Only a quantized signal of agent $i$'s action is revealed, whose $w_i\in[v_{n-1},v_n)$, is revealed, and therefore her reputational benefit comes from 
\begin{eqnarray}\nonumber
Y(v_n,v_{n-1})&\coloneqq &\mathbb{E}[w_i\mid L_{Bm}(\cdot),a_i\in [\theta_{n-1},\theta_n)]\\\nonumber
& =& \mathbb{E}[w_i\mid w_i\in[v_{n-1},v_n)]=\frac{\int_{v_{n-1}}^{v_n}wf(w)\,dw}{F(v_n)-F(v_{n-1})}.
\end{eqnarray}
Note that $Y(v_{n+1},v_{n})\geq Y(v_n,v_{n-1})$ for all $n\in\{1,\ldots,N-1\}$. 

An agent $i$, with $w_i\in[v_{n-1},v_n)$, who is evaluating a deviation from the candidate equilibrium, will face the following optimization problem:
\begin{equation}
    \max_{n\in\{1,\ldots,N\}}\left\{\Big[\frac{w_i}{\alpha}\Big]_{\theta_{n-1}}^{\theta_n}w_i-\frac{\alpha}{2}\Big(\Big[\frac{w_i}{2\alpha}\Big]_{\theta_{n-1}}^{\theta_n}\Big)^2+\beta Y(v_n,v_{n-1})\right\},
\end{equation}
with $[x]_a^b=\min\{\max\{x,a\},b\}$. This is obtained by first solving the optimization problem \eqref{eq:payofftype_bm} within the interval $a_i\in[\theta_{n-1},\theta_n)$, the solution to which is $\displaystyle\Big[\frac{w_i}{\alpha} \Big]_{\theta_{n-1}}^{\theta_n}$, followed by an optimization over $n\in\{1,2,\ldots,N\}$. 

If we assume that for all $n'\geq n^*$ which may depend on $i$, 
\begin{equation}\label{eq: condition existence Type-D}
\theta_{n'}\left(w_i-\frac{\alpha\theta_{n'}}{2}\right)+\beta Y(v_{n'+1},v_{n'})>\theta_{n'+1}\left(w_i-\frac{\alpha\theta_{n'+1}}{2}\right)+\beta Y(v_{n'+2},v_{n'+1}),
\end{equation}
with $n^*$ such that $\frac{w_{i}}{\alpha}\in[\theta_{n^*-1},\theta_{n^*})$
, then this optimization problem can be rewritten as:
\begin{equation}
\max \left\{\frac{w_i^2}{2\alpha}+\beta Y(v_{n^*},v_{n^*-1}),\quad\theta_{n^*}(w_i-\frac{\alpha\theta_{n^*}}{2})+\beta Y(v_{n^*+1},v_{n^*}) \right\}.
\end{equation}
Now the deviation of player $i$ can be restricted to a choice in $[\theta_{n^*-1},\theta_{n^*})$ (in fact $\frac{w_i}{\alpha}$) or a choice in $[\theta_{n^*},\theta_{n^*+1})$ (in fact $\theta_{n^*}$). The equilibrium condition is therefore equivalent to the case where player $i$ with $w_i=v_n$ is indifferent to choosing $\frac{v_n}{\alpha}$ or $\theta_{n}$. Therefore, we need that, for all $n\in\{1,\ldots,N-1\}$, $v_n$ satisfies the following vector fixed point equation:
\begin{equation}
\frac{v_n^2}{2\alpha}+\theta_n \left( \frac{\alpha\theta_n}{2}-v_n\right) = Y(v_{n+1},v_{n})-Y(v_{n},v_{n-1}),
\end{equation}
with $v_n\in[\alpha\theta_{n-1},\alpha\theta_n)$ for all $n\in\{1,\ldots,N-1\}$. 

Consider now an example where $F$ is a uniform distribution over the interval $[0,1]$. The previous fixed-point equation can be rewritten as a system of equations:
\begin{equation}\label{eq:system fixed points type bm}
\displaystyle\frac{(v_n-\alpha \theta_n)^2}{2\alpha}= \beta \frac{v_{n+1}-v_{n-1}}{2},
\end{equation}
i.e. $\alpha \theta_n-v_n=\sqrt{\alpha \beta (v_{n+1}-v_{n-1})}$ (because $v_n<\alpha \theta_n$), with $v_0=0$ and $v_N=1$.
\begin{figure}
    \centering
    \includegraphics[scale=0.9]{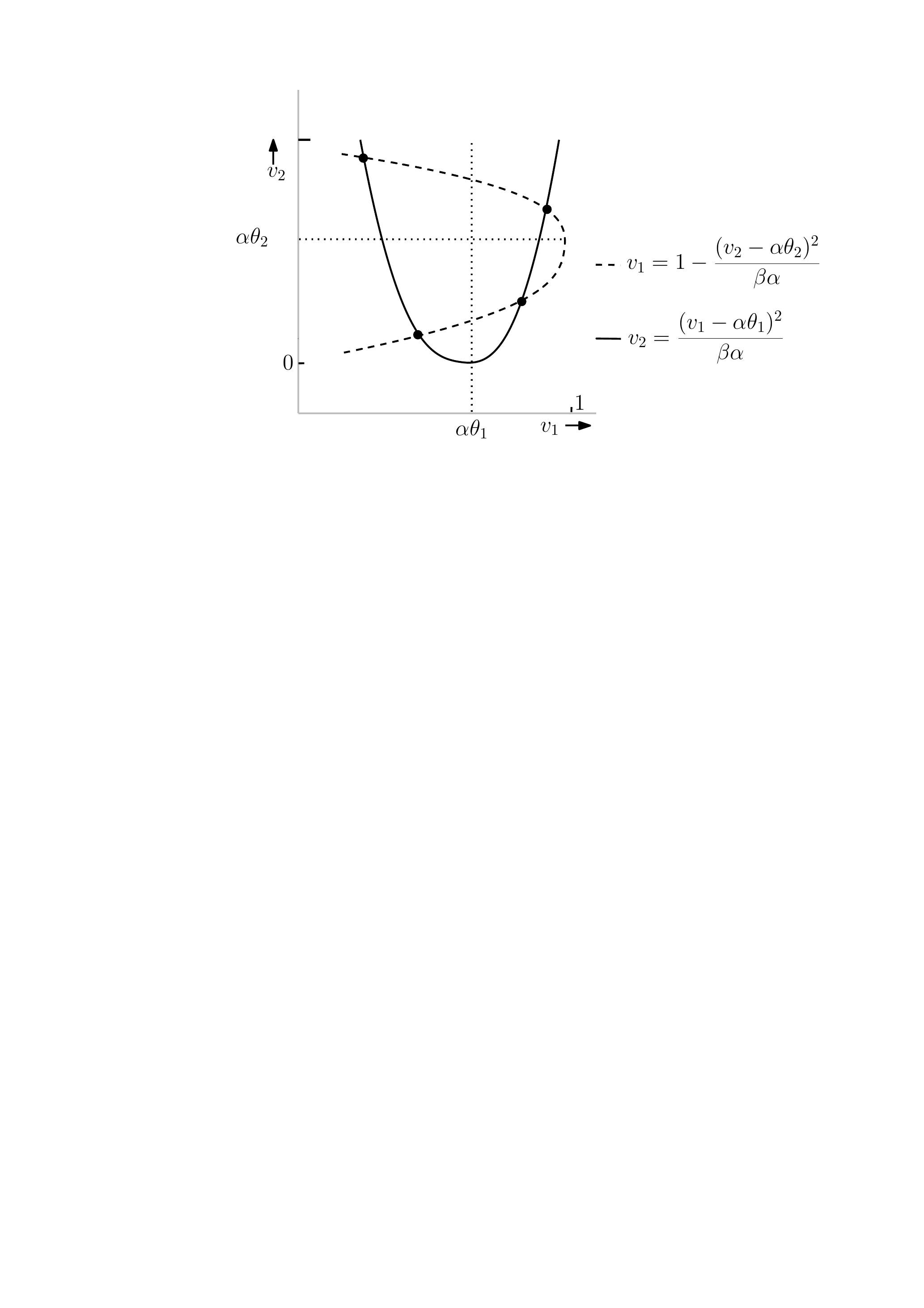}
    \caption{Fixed point equations for Type-Bm feedback, under the condition that $v_1\in(0,\alpha \theta_1)$ and $v_2\in(\alpha \theta_1,\alpha \theta_2)$.}
    \label{fig:fixed_point_type_BM}
\end{figure}
For two thresholds, the implicit plots associated to the fixed point systems \eqref{eq:system fixed points type bm} are depicted in Figure \ref{fig:fixed_point_type_BM}. We fix $\theta_1=\frac{1}{3}$, $\theta_2=\frac{2}{3}$, $\alpha=1$ and $\beta=0.1$. We obtain $v_1=0.14$ and $v_2=0.37$. With these parameters $W_{Bm}=0.562$. As a comparison, type-B feedback with $\theta=1/3$ yields a lower $W_{B}=0.55$. 

\subsection{Service provider optimization problem.}
The service provider may wish to guarantee a given level of privacy, by designing a feedback $L$ that will satisfy the following constraint:
\begin{equation}
V(L)\geq V_{\min}.  
\end{equation}
If $G^*_L$ is the equilibrium action profile associated with $L$, then the service provider should solve the following constrained optimization problem:
\begin{equation}
W\coloneqq\max_{L(\cdot)} \int_0^{a_{\max}}b\,dG^*(b).
\end{equation}
subject to:
\begin{equation}
V(L)\geq V_{\min}.  
\end{equation}
A solution to this problem is beyond the scope of this paper. The problem is posed only to direct the research community's attention to such optimizations with the following additional note in relation to practice. While solutions to such problems may be extremely useful from an understanding perspective, we should also keep in mind that the thresholds should be restricted to ``pivotal'' values that the members of the society can easily relate to. For instance, in the context of sponsorship levels for a event, suppose that the optimal two-level feedback for maximizing donor contributions is to set a threshold of 1023.11 dollars. It may however be more natural to set the threshold at the pivotal 1000 dollars in a ``digital society'' or the pivotal 1024 dollars in a society more comfortable with bits.

\section{Related works.} \label{sec:related works}

This section is dedicated to the related works and especially the main difference between our work with (1) the information design paradigm framed in \cite{bergemann2017information} and (2) the 
model suggested in \cite{benabou2006incentives}.

\textbf{Information on information design:} This work lies within the framework of information design. In \cite{bergemann2017information}, the authors suggested a unified framework for the information design problem. They consider a finite set of agents with a finite set of actions. The payoff of each agent depends on its own type and a state of the nature unknown to her. The designer knows the state and also the type associated with each agent. He will decide an information structure to satisfy his own objective. To provide a characterization of the set of equilibria achieved by controlling the information, the authors define the concept of obedience. Obedience is given by the set of "recommendations" provided by the designer that agents will follow (in a Bayesian setting). This specific set of strategies can be proved to be equal to the set of Bayesian correlated equilibria, which is characterized by linear algebraic constraints. Once this set is known, a Bayesian correlated equilibrium is chosen to maximize the objective of the designer and an extended information structure is suggested by the designer such that the desired Bayesian Nash equilibrium is equal to a Bayesian correlated equilibrium. This can be proven to be equivalent to a linear optimization problem.
 For a general overview of information design in the previously described set up, from the application point of view or from the algorithmic point of view, see the survey papers \cite{bergemann2017information,dughmi2017algorithmic,horner}. 
 This body of works has significant differences with our approach. In \cite{bergemann2017information}, the authors assume that the designer knows the state of nature and will design a probabilistic signal mechanism such that induces agents to behave according to the desire of the designer. A typical example is how to create an incentive for companies to invest in a risky investment by hiding the risky nature of the investment using information design.  
In our work, we design the information that is supplied to the society. As such, our model is not captured within the framework of information design suggested in \cite{bergemann2017information}. We also use a deterministic signal mechanism (such as quantization) instead of a probabilistic mechanism. Finally the framework suggested in \cite{bergemann2017information} is not adapted to a continuum of types/agents/actions. \\

\noindent \textbf{Social comparison and norm based approach: } Our model builds on \cite{benabou2006incentives}, in which the authors define a game theoretical model
 where the agents have to decide the intensity of their prosocial activity based on the following: their intrinsic motivation, a reward associated to the prosocial activity, a cost for performing the prosocial action and a desire to be perceived as an altruistic agent. 
 In this  setting, agents have private information about their intrinsic motivations but otherwise do not have full information concerning the intrinsic motivations of others. An information designer who, having full information about the actions, provides limited information about all agents' actions in order to aggregate prosocial action.
 Extension of this model has been studied in \cite{benabou2011laws,ali2016image}. In \cite{benabou2006incentives}, the authors examine crowding out effects when the service provider is increasing the reward for performing a prosocial activity. In \cite{benabou2011laws}, a similar model is studied, but agents have imperfect knowledge of the distribution of the intrinsic values. They then study an information design problem where the question is whether or not the designer should use "norms-based interventions" (in other words to reveal or not to reveal more information concerning the distribution of the intrinsic value). In \cite{ali2016image}, the authors try to understand the impact of making the action of an agent visible to the community. By doing so the designer can increase the effect of the reputational benefit. However, at the same time, the privacy of an agent is violated. Our work captures different aspects of the problem which are complementary to \cite{benabou2006incentives,benabou2011laws,ali2016image}. An important similarity between these works and ours is the fact that the reputation benefit is modeled as a signal extraction problem. The major differences are the following: 1) In \cite{benabou2006incentives,benabou2011laws,ali2016image}, the authors consider that the prosocial action of an agent is revealed to a proportion of the society. However when an action is revealed it is fully revealed as opposed to our model. The information design is on controlling the proportion of the society to whom information is revealed or on the amount of initial knowledge of the agent concerning the parameter of the game. In our work, only a quantized version of the feedback is revealed to the agents. 2) This new feedback requires a new mathematical proof of the existence of the mean field equilibrium. 3) We address the privacy dilemma from another perspective. Instead it is to provide partial privacy is not to limit the fraction of the network that gets to know the action of an agent. Our method is to provide a quantized version of an agent's action and thereby respect privacy to some degree.\\

\noindent \textbf{Other related works:} There is a wide range of applications of information design such as routing game \cite{das2017reducing,acemoglu2016informational}, queueing game \cite{sharma2017queuing,altman2013admission}, economic applications concerning persuasion \cite{kamenica2011bayesian,bergemann2016information}, and matching markets \cite{ostrovsky2010information}. These works are not using the framework of information design defined in \cite{bergemann2017information} or the one defined in our paper. But still, information (size of the queue, roads available) is hidden to improve the efficiency of a given system (delay in a queue, congestion in a city), and in that respect is related to our work.

\section{Discussion and extensions.}\label{sec: extension}

In this section, we will discuss the different extensions of our work. \\

\noindent \textbf{Convex cost functions.} Although all our results have been proved for a quadratic cost function, extension to non-decreasing convex cost functions is direct. The non-decreasing assumption is  common  in economics: more prosocial actions have higher cost. The convexity preserves monotonicity in $w_i$ of the best response. The proofs for existence will be similar to the quadratic case.\\

\noindent \textbf{Equilibrium selection and learning algorithm.} An interesting question would be to understand if there is a natural decentralized learning algorithm that converges to the equilibrium induced by the different feedback. For instance, we can imagine a mechanism that mimics the behavior of an agent in the society. She will learn her reputation (or rank) and will adapt her prosocial action over time depending on her current reputation level.\\

\noindent \textbf{Non-linear reputational benefit.}  In this paper we study linear reputation benefit in the sense that $\mathbb{E}[w_i\mid L(\cdot)]$ is the expectation of a linear function of $w_i$. Therefore, we did not observe effects when an agent who is already perceived as prosocial relaxes on the intensity of her prosocial action because it is costly to maintain a high reputation in the society. It would be interesting to extend this work to the case where the reputational benefit is equal to $\mathbb{E}[s(w_i)\mid L(\cdot)]$, where $s(\cdot)$ is a concave function of $w_i$. Type-B feedback will have similar results, but for type-A, a careful study of the ordinary differential equation that appears in the proof of Theorem \ref{prop:fixed point type c} will be needed.\\

\noindent \textbf{Common resource sharing.}  In the current model, the interaction between the agents is only captured through the reputational benefit. However, it could be that the lesser the level of prosocial actions of agents in the society, the higher of the cost of an agent's action. Indeed, consider a routing game on a network with three roads and with one of them being really cheap (Braess's Paradox). The norm maybe that taking the costly road is a prosocial action. The cost of an agent's action also depends on the fraction of the population that take the same road. In this new framework, we need to extend the classical framework of routing game by adding the reputational benefit to the cost function.  There will be a need to adapt the classical results of routing games, and the ones of this paper, to prove the existence of an equilibrium.\\

\noindent \textbf{Creation of a collective identity.} Our focus in this paper has been on how to use social comparison and reputational benefit for inducing prosocial behavior at the individual level. From a larger perspective, however, it would be interesting to explore ways to create and maintain a collective identity or a collective awareness. For example, a group of people in a neighborhood could be induced to form a team and meet team goals of reduced consumption via social comparison with other similar groups. Another example could be to provide the collective with a reduced consumption goal. Then one could indicate how much an individual's prosocial action has contributed towards the collective goal, such as how many kg of $CO_2$ emission has been saved. The challenge in such collectives is to ensure that interest is sustained over a sufficiently long duration. Methods that can help sustain such interest could be a topic of future research.\\

\noindent \textbf{Social network for immediate impact.} Education can help improve the ``starting point'', for example, shift the distribution $F$ of the propensity for prosocial action. Policies to improve this starting point may however not be effective in the short term. In contrast, our approach has been to make use of the social network to obtain a more immediate impact. Networks are profoundly changing the way people aggregate preferences, and our approach to make use of social comparison to induce prosocial behavior is in line with this. It may be interesting to see how to combine the long-term policy-based approaches with the social comparison approaches.

\section{Conclusions.}\label{sec: conclusion}

In this paper, our goal has been to show the importance of information design to improve prosocial behavior. We considered two types of feedback, one without privacy that revealed all actions to all agents in the network, and another that provided only quantized information about agents' actions. We extended the initial model of \cite{benabou2006incentives} to more complex information designs which required us to derive new proofs of existence, sufficient conditions for uniqueness, and characterization of the mean field equilibria. When the intrinsic motivation is drawn from the uniform distribution we obtained an explicit expression for the mean field equilibria. Moreover, we identified a setting where it was beneficial to be economical with the feedback information to improve the expected prosocial action. Finally, we suggested several possible extensions and future directions which we feel will be of use to researchers in the field. 

\section{Appendix: Game reformulation \textit{\`a la} Crawford and Sobel \cite{crawford1982strategic}.}\label{appendix: Game Reformulation}
The socio-economic setting behind the computation of the reputational benefit is the following. (1) The society observes for each agent  $i\in[0,1]$ a quantized feedback $L(a_i)$ of her action. (2) Using this collection of observations, the society estimates, for each agent $i$, the intrinsic motivation $E[w_i\mid\{L(\cdot)\}]$. (3) Agent $i$ cares about her reputation and derives a reputational benefit $E[w_i\mid\{L(\cdot)\}]$.

From a strict mathematical perspective, we have been ambiguous in not specifying the measure space over which the continuum of random variables $\{a_i\}_{i\in \mathcal{I}}$ are measurable. The $\sigma$-algebras over which the conditional measures of $w_i$ given $\{L(a_i)\}_i$ are defined has been also left unclear. This is the intuitive language adopted in some literature for ease of exposition without obscure mathematical formalism. 

 We now provide the necessary mathematical formalism. The game is composed of only two players, the agent (leader/sender) and the society (follower/receiver). Let $w$ be the type of the agent distributed according to $F(\cdot)$. The realization $w$ is known only to the agent, but $F(\cdot)$ is common knowledge. Let $a\in[0,+\infty)$ be the action of the agent. Let $L(a)$ be revealed to the society. Let $y\in[0,+\infty)$ be the action of the society.  The players' respective utilities are given by the following.
\begin{enumerate}
    \item The agent's utility: 
    \begin{equation}
        U(a,w,y)=aw-C(a)+\beta y,
    \end{equation}
    \item The society's utility: 
    \begin{equation}
        V(w,y)=-(w-y)^2.
    \end{equation}
\end{enumerate}
Ideally, the society wants to choose $y$ equal to $w$. However, the society does not know $w$ and only observes $L(a)$.

The strategy of the agent, denoted $a(w)$, can be relaxed to be a statistical experiment $\sigma_a: w \rightarrow[0,+\infty)$. We now suggest a framework, based on Crawford \& Sobel \cite{crawford1982strategic}, in which the results of this paper can be interpreted in a mathematically rigorous way. 

The statistical experiment $\sigma_a$ yields on an the associated statistical experiment $\sigma_L: w\rightarrow[0,+\infty)$). After observing $L(a)$, the society will update the prior belief $F$ to the a posterior belief, whose density is given by
\begin{equation}\label{eqn: update prior}
\mathbb{P}(w\mid L(a))=\frac{\sigma_L(L(a)\mid w)f(w)}{\int_0^{+\infty}\sigma_L(L(a)\mid v)f(v)\,dv}.
\end{equation}
From \eqref{eqn: update prior}, we can deduce that the society's best reply to $\sigma_a$ is given by:
\begin{eqnarray}\nonumber
y^*(\sigma_a)&=&\text{arg } \max_{y\in[0,+\infty)}-\int_0^{+\infty}\mathbb{P}(w\mid L(a))(w-y)^2\,dw,\\\nonumber
&=&\mathbb{E}[w\mid L(a)],
\end{eqnarray}
the minimum mean squared error estimate under the joint law coming from $F$, $\sigma_a$ for $(X,A) $ and the associated $F,\sigma_L$ for $(X,L)$.
The best reply to $y(\sigma_a)$ for the agent is given by $\sigma_a^*$ which is any measure supported on the set: 
\begin{equation}
\displaystyle\text{arg max}_{a\in[0,+\infty)}aw-C(a)+\beta y^*(\sigma_a).
\end{equation}
 The equilibrium we study in this paper is $(\sigma_a^*,y^*(\sigma_a^*))$. In the setting of Crawford and Sobel, the $L$ was under the control of the leading player. Our setting differs in that $L$ is under the control of the information designer.

\bibliographystyle{abbrv} 
\bibliography{references}



\end{document}